\documentclass[11pt]{article} 
\usepackage{times}  
\usepackage{helvet}  
\usepackage{courier}  
\usepackage[hyphens]{url}  
\usepackage{multicol}
\usepackage{graphicx} 
\urlstyle{rm} 
\usepackage{fullpage}
\usepackage{caption} 
\DeclareCaptionStyle{ruled}{labelfont=normalfont,labelsep=colon,strut=off} 
\frenchspacing  
\setlength{\pdfpagewidth}{8.5in}  
\setlength{\pdfpageheight}{11in}  
\usepackage{algorithm}
\usepackage{algorithmic}

\usepackage{newfloat}
\usepackage{listings}
\lstset{%
	basicstyle={\footnotesize\ttfamily},
	numbers=left,numberstyle=\footnotesize,xleftmargin=2em,
	aboveskip=0pt,belowskip=0pt,%
	showstringspaces=false,tabsize=2,breaklines=true}
\floatstyle{ruled}
\newfloat{listing}{tb}{lst}{}
\floatname{listing}{Listing}

\usepackage{amssymb}
\usepackage{amsmath}
\usepackage{fullpage}
\usepackage{caption}
\usepackage{wrapfig}
\usepackage[belowskip=-8pt]{caption}
\usepackage{wrapfig}
\usepackage{subcaption}
\usepackage[square,numbers,sort]{natbib}
\usepackage{color}
\usepackage{authblk}
\usepackage{colortbl}
\definecolor{webgreen}{rgb}{0,0.4,0}
\definecolor{webbrown}{rgb}{0.6,0,0}
\definecolor{purple}{rgb}{0.5,0,0.25}
\definecolor{darkblue}{rgb}{0,0,0.7}
\definecolor{darkred}{rgb}{0.7,0,0}
\usepackage{mathtools}
\usepackage{hyperref}
\hypersetup{colorlinks,citecolor=darkblue,filecolor=black,linkcolor=darkred,urlcolor=webgreen, pdfstartview={XYZ null null 1.00}}
\newcommand{\ignore}[1]{}




\newtheorem{lemma}{{Lemma}}
\newtheorem{proposition}{{\sc Proposition}}
\newtheorem{corollary}{{Corollary}}
\newtheorem{theorem}{{Theorem}}
\newtheorem{definition}{{Definition}}

\newtheorem{fact}{{Fact}}



\sloppy
\usepackage{cleveref}
\crefname{claim}{claim}{claims}
\crefname{fact}{fact}{facts}
\crefname{algorithm}{algorithm}{algorithms}
\crefname{observation}{observation}{observations}
\crefname{equation}{equation}{equations}
\crefname{assumption}{assumption}{assumptions}
\crefname{lemma}{lemma}{lemmata}
\crefname{corollary}{corollary}{corollaries}
\crefname{figure}{fig.}{figs.}

\newenvironment{proof}{\noindent {\em Proof\/}:\enspace}
{\hfill $\blacksquare{}$ \medskip \\}

%

\DeclareMathOperator*{\argmax}{\arg\!\max}


\usepackage{pgf}
\usepackage{verbatim}
\usepackage{enumerate}
\usepackage{amssymb}
\usepackage{mathrsfs}
\usepackage{algorithm}
\usepackage{resizegather}

\newif\ifverbose
\verbosetrue 
\newcommand{\sn}[1]  {\ifverbose {\noindent \textcolor{blue}{{\bf SN: }{\em #1}} } \else \fi }

\usepackage{url}

\usepackage{xspace}

\renewcommand{\AA}{\ensuremath{\mathcal A}\xspace}

\newcommand{\DD}{\ensuremath{\mathcal D}\xspace}

\newcommand{\GG}{\ensuremath{\mathcal G}\xspace}

\newcommand{\MM}{\ensuremath{\mathcal M}\xspace}
\newcommand{\NN}{\ensuremath{\mathcal N}\xspace}

\newcommand{\PP}{\ensuremath{\mathcal P}\xspace}
\newcommand{\QQ}{\ensuremath{\mathcal Q}\xspace}

\renewcommand{\SS}{\ensuremath{\mathcal S}\xspace}

\newcommand{\aaa}{\ensuremath{\mathfrak a}\xspace}

\newcommand{\sss}{\ensuremath{\mathfrak s}\xspace}

\usepackage{nicefrac}

\newcommand{\mech}{\text{\bf \texttt{VCG-T}}}
\newcommand{\mechfull}{{\bf \texttt{VCG}} with {\bf \texttt{T}}ime delays}

\newcommand{\multi}{\text{\bf \texttt{MIA}}}
\newcommand{\maafull}{\text{\bf \texttt{MIA}} {\bf \texttt{A}}pproximation {\bf \texttt{A}}lgorithm}
\newcommand{\MUCA}{\text{\bf \texttt{MUCA}}}
\newcommand{\MAA}{\text{\bf \texttt{MAA}}}
\newcommand{\hi}{\text{\bf \texttt{high}}}
\newcommand{\med}{\text{\bf \texttt{medium}}}
\newcommand{\lo}{\text{\bf \texttt{low}}}

\setcounter{secnumdepth}{2} 
\title{Social Distancing via Social Scheduling}

\author[1]{Deepesh Kumar Lall}
\author[1]{Garima Shakya}
\author[1]{Swaprava Nath}

\affil[1]{\small Indian Institute of Technology Kanpur, \texttt{\{deepeshk, garimas, swaprava\}@iitk.ac.in}}

\begin{document}

\maketitle

\begin{abstract}
Motivated by the need of {\em social distancing} during a pandemic, we consider an approach to schedule the visitors of a facility (e.g., a general store). Our algorithms take input from the citizens and schedule the store's discrete time-slots based on their importance to visit the facility. Naturally, the formulation applies to several similar problems. We consider {\em indivisible} job requests that take single or multiple slots to complete. The salient properties of our approach are: it (a)~ensures social distancing by ensuring a maximum population in a given time-slot at the facility, (b)~aims to prioritize individuals based on the importance of the jobs, (c)~maintains truthfulness of the reported importance by adding a {\em cooling-off} period after their allocated time-slot, during which the individual cannot re-access the same facility, (d)~guarantees voluntary participation of the citizens, and yet (e)~is computationally tractable. The mechanisms we propose are prior-free. We show that the problem becomes NP-complete for indivisible multi-slot demands, and provide a polynomial-time mechanism that is truthful, individually rational, and approximately optimal. Experiments with data collected from a store show that visitors with more important (single-slot) jobs are allocated more preferred slots, which comes at the cost of a longer cooling-off period and significantly reduces social congestion. For the multi-slot jobs, our mechanism yields reasonable approximation while reducing the computation time significantly.
\end{abstract}
\section{Introduction}
\label{sec:intro}

Pandemics like the COVID-19 showed that one of the most effective solutions against infectious diseases is {\em social distancing}~\cite{wilder2020isolation}. Therefore, it is a practice we ought to master and stay prepared as a preventive disease containment measure in the future. While most citizens are willing to follow social distancing, the lack of communication and coordination among them, particularly prior to an immediate lockdown, overcrowds shopping centers even though the footfalls could have been evenly distributed over the shop's working hours to maintain a medically recommended social distance. 

In our setting, each customer has an infinite queue of jobs that have different importances (privately known only to the customer) and lengths. However, the customers are {\em myopic}, i.e., worries only about the last unprocessed job. They experience a better value if the job is assigned their preferred slots, but also a {\em disutility} to wait before submitting their next job for allocation. All jobs are {\em indivisible}, i.e., has to be completed once started.
In this paper, we consider {\em two} settings: (i)~all jobs are of single time-slot length, (ii)~different jobs are of different integral time-slot lengths.

Though cast in the context of social scheduling for pandemics, a similar problem arises in general scheduling settings, e.g., scheduling traffic in freeways~\cite{aydos2014scats}
or multi-ownership computational jobs in a single-core processor~\cite{Sahni76}.
Since all such settings have multiple agents competing for a common resource and the importance of the jobs are private, the solutions involving truthful revelation in a computationally tractable manner also apply to those settings.

This paper introduces a novel approach to pandemic containment using mechanism design that reduces the congestion in facilities, satisfies various desirable theoretical properties, and exhibits fair performance in practice. The following section details the contributions of this paper.

\subsection{Brief Problem Description and Contributions}
\label{sec:contributions}

The opening hours of a facility are divided into {\em periods} (e.g., a day), each of which has multiple {\em slots} (e.g., every hour when it is open). The customers have an unlimited number of outstanding jobs to be processed in a sequence at the facility, and they report the valuations of the immediate unprocessed job.\footnote{A typical shopper knows that she needs to visit a store many times but precisely knows the importance of the immediate visit.} A valuation $v_{ij}$ denotes agent $i$'s importance for that job if it starts in slot $j$. Since this information is private to agent $i$, a mechanism needs to elicit this truthfully. 
In a setting where the agents' preferences are private, if the mechanism has no additional structures (e.g., transfers of payoff), only {\em dictatorial} mechanisms (where a pre-selected agent's favorite outcome is always selected) are truthful~\cite[Thm 7.2]{Roberts79}. 
Therefore, the use of transfers in some form is inevitable to ensure truthfulness of the agents. However, for pandemic containment, the use of money for scheduling citizens is unethical and illegal in certain countries. Hence, we use {\em time-delay} as a replacement of money. Waiting time is often seen as a resource that individuals agree to trade with~\cite{leclerc1995waiting}. Our scheduling approach will work in all places where payment can be replaced with a time-delay. 
Quite naturally, an agent prefers to have a more valuable slot assigned to her with less time-delay. We model the agents' payoffs using the well-known quasi-linear payoff model \cite[Chap 10]{SL08}.
This competitive scenario induces a {\em game}
where agents' actions are to report the valuations. The agents may {\em overstate} (or {\em understate}) their actual valuations.
The contributions of this paper can be summarized into the following {\em four} major points:
\begin{itemize}
    \item Even though maximizing the {\em social welfare} (sum of the agents' valuations) of the slot allocation is a combinatorial optimization problem; we show that this problem is computationally easy to solve for jobs with single-slot length (\Cref{thm:relax,thm:strong}). 
    \item We show that the standard Vickrey-Clarke-Groves (VCG) payment~\citep{Vick61,Clar71,Grov73} can be used as the delay (cooling-off time) in this setting to ensure truthfulness (\Cref{thm:dsic}) and participation of the agents (\Cref{thm:ir}). We call the allocation and delay together as the mechanism \mech\ (\mechfull).
    \item The welfare maximizing allocation of multi-slot jobs are computationally hard (\Cref{thm:NPc}). We propose a polynomial time mechanism (\Cref{lem:alg2complexity}) which ensures participation (\Cref{lem:alg2IR}), truthfulness (\Cref{lem:alg2IC}), and is approximately optimal (\Cref{thm:apprx_multi}).
    \item Our real and synthetic data experiments (\Cref{sec:experiment}) show that visitors with more important jobs are allocated more preferred slots, which comes at the cost of a longer delay to re-access the store. We show that social distancing is significantly improved using users' visit data from a store (\Cref{sec:reduction}). For the multi-slot jobs, our approximately optimal mechanism provides a reasonable approximation at a much reduced computational cost in practice (\Cref{sec:approx-plot}).
\end{itemize}
The mechanisms presented in this paper are {\em prior-free}, i.e., they do not depend on the probabilistic information of the valuations.

\subsection{Related Work}
\label{sec:literature}
Social distancing measures have been widely successful and recommended for pandemic control~\cite{glass2006targeted,thunstrom2020benefits,fong2020nonpharmaceutical}.
However, it is also observed that the benefits of social distancing depend on the extent to which the citizens follow it. An extensive part of the recent research related to social distancing during COVID19 aims to understand the relationship of social distancing with different public policies and other factors~\cite{gosak2021endogenous,arazi2021discontinuous,cho2020mean,pejo2020corona}. 
Individuals are sometimes reluctant to pay the costs inherent in social distancing which can limit its effectiveness as a control measure~\cite{reluga2010game}. \citet{toxvaerd2020equilibrium} considers social distancing in the susceptible-infected-recovered (SIR) epidemiological model and 
shows that ``during the equilibrium social distancing phase, individuals gradually reduce their social distancing efforts despite the infection probability not decreasing". \citet{cavallo2021social} shows that under an uncoordinated model, every equilibrium involves more social contact than it occurs in a social optima.
A related thread of social distancing measures exists in workforce-intensive organizations in the forms of rostering, workplace design, or allowing people to work from home~\cite{roycroft2020rostering}. However, these measures are centrally controlled, and individuals do not have a scope to behave strategically. On the other hand, the businesses maintain social distancing by enforcing certain appointment booking apps (web or smartphone-based) on customers. These apps run a {\em first-come-first-served} algorithm for slot booking and do not provide any priority preservation guarantees.\footnote{An example app is \url{www.appointy.com}.}

%

 Coordination among the citizens is essential for social distancing \cite{shadmehr2020coordination}, but the technology is still not much developed to address the social coordination problem, particularly when the participants are self-interested independent decision makers. Mechanism design is an approach which can equip an artificially intelligent app to satisfy certain desirable properties on the face of the participants' strategic behavior. This is attempted in the current paper.

Resource allocation with monetary transfers to ensure truthfulness, e.g., in the context of jobshop scheduling \cite{hajiaghayi2005online}, has a rich literature that is close to our work. 
\citet{lavi-nisan04dynamic-online-auctions} study online supply curves based auction of {\em identical divisible goods} that ensures truthfulness. In this paper, we consider an offline allocation problem but a comparatively more complex one (multiple resources and indivisible tasks of different length). 
%
\citet{chen2016efficient} propose a truthful approximate mechanism for online allocation of job to machines where the job can resume or restart once preempted. We provide a comparatively better approximation ratio for efficiency, albeit in an offline setting.
The other related line of work involves designing incentives in queueing problems with specific cost structures that aim to find an efficient allocation truthfully and also ensures budget balance~\cite{mitra2001mechanism,bloch2017second,ghosh2020prior}, while our model can admit costs of any structure.

In the context of job scheduling without money, \citet{koutsoupias2014scheduling} studies the allocation of independent tasks to machines.
Every machine reports the time it takes to execute each task and the mechanism provides an approximation to the minimum {\em makespan} in a truthful manner without money for one task---which can be repeated for multiple tasks. 
In this paper, we maximize the {\em sum of the visitors' valuations} which are independent of the length of the job and provide an approximate mechanism for multiple tasks maintaining the slot-capacity.
%
\citet{braverman2016optimal} study a similar problem of the allocation of medical treatments at hospitals that have differential costs to patients and the patients value the hospitals differently. 
The waiting time before being admitted to the hospital helps to get a stable matching. However, the value of the agents do not change over slots and hence is different from our setting.

\if 0
\subsection{Related Work}
\label{sec:literature}

\sn{needs to be reduced by one column}

Social distancing measures have been widely successful and recommended for pandemic control~\cite{glass2006targeted,thunstrom2020benefits,fong2020nonpharmaceutical}.
However, it is also observed that the benefits of social distancing depend on the extent to which the citizens follow it. An extensive part of the recent research related to social distancing during COVID19 is to understand the relationship of social distancing with different public policies and other factors~\cite{toxvaerd2020equilibrium,gosak2021endogenous,arazi2021discontinuous,cho2020mean}. \cite{pejo2020corona} model the coronavirus situation as a game between rational agents where the strategies are whether to wear a mask or not, to go out for a meeting or not, and others. They show how the decision of an agent to go out to attend a meeting at the risk of her own life depends on the maximum duration, the maximum number of people allowed in a gathering, and the probability of getting infected. \citet{toxvaerd2020equilibrium} considers social distancing in the susceptible-infected-recovered (SIR) epidemiological model and characterizes the equilibrium dynamics in terms of peak prevalence, time-domain properties of the disease, and other factors. \citet{toxvaerd2020equilibrium} shows that "during the equilibrium social distancing phase, individuals gradually reduce their social distancing efforts despite the infection probability not decreasing."

Individuals are sometimes reluctant to pay the costs inherent in social distancing, which can limit its effectiveness as a control measure~\cite{reluga2010game}. This is particularly important when the movement of the people is mainly uncoordinated. Under an uncoordinated model, \cite{cavallo2021social} shows that every equilibrium involves more social contact than it occurs in a social optima. Coordination is essential for social distancing \cite{shadmehr2020coordination}, but the technology is still not much developed to address the social coordination problem, particularly when the participants are self-interested independent decision-makers. Mechanism design is an approach that can equip an artificially intelligent app to satisfy certain desirable properties on the face of the participants' strategic behavior. This is attempted in the current paper.

A related thread of social distancing measures exists in workforce-intensive organizations in the forms of rostering, workplace design, or allowing people to work from home~\cite{roycroft2020rostering}. However, these measures are centrally controlled, and an individual does not have a scope to behave strategically.

On the other hand, the businesses maintain social distancing by enforcing certain appointment booking apps (either web or smartphone-based) to their customers. These apps run a {\em first-come-first-served} (FCFS) algorithm for slot booking and do not provide any priority preservation guarantees.\footnote{An example app is \url{www.appointy.com}.}

Resource allocation with monetary transfers to ensure truthfulness, e.g., in the context of job shop scheduling, has a rich literature that is close to our work (even though our mechanisms do not use money). \cite{friedman2003pricing} consider interruptible and non-interruptible versions of the online {\em single} resource allocation problem, where the resources can be allocated to a limited number of users, e.g., the WiFi at Starbucks or shared computation on a server. They assume that the agents can not lie about their arrival time and get better valuations and prove that the online version of VCG is strategyproof only if the allocation rule is time-monotonic. They also show that a simple fixed price payment can result in a close-to-optimal solution for a non-interruptible version of their problem. \cite{lavi-nisan04dynamic-online-auctions} study online supply curves based auction of {\em identical divisible goods} that ensures truthfulness. This paper considers an offline allocation problem but a comparatively more complex one (multiple resources and indivisible tasks of different lengths). 

\cite{hajiaghayi2005online} provide an online auction for reusable goods, e.g., assigning time slots of a machine to unit length jobs. They study {\em single-capacity machines} and give a truthful randomized mechanism. 

Our work is different in the sense that the jobs we consider and the machine capacities can be of varied length. \cite{chen2016efficient} consider a job allocation to machines that is closely related to our setting. However, the job arrivals and allocation are online in that setting. They propose a truthful approximate mechanism for models where the job can resume or restart once preempted. We provide a comparatively better approximation ratio for efficiency, albeit in an offline setting.
The other related line of work involves designing incentives in queueing problems with specific cost structures that aim to find an efficient allocation truthfully and also ensures budget balance~\cite{mitra2001mechanism,bloch2017second,ghosh2020prior}, while our model can admit costs of any structure.

In the context of job scheduling without money, \cite{koutsoupias2014scheduling} studies the allocation of independent tasks to machines, where the machines are lazy agents and prefer not to execute any task. Every machine reports the time it takes to execute each task (private information), and the allocation is done to minimize the {\em makespan}. The author provides a truthful in expectation, approximate mechanism without money for one task---which can be repeated for multiple tasks. 

In our setup, we consider maximizing the {\em sum of the visitors' valuations} which are independent of the length of the job and provide an approximate mechanism for multiple tasks respecting the capacity at all times.

\cite{braverman2016optimal} study the Provision-After-Wait allocation of medical treatments at hospitals of different costs to patients who each value the hospitals differently. A central authority pays the hospital bills for every patient but has a finite budget. In every allocation period, a waiting time is computed for every hospital, which helps to get a stable matching. If a patient wants treatment at a hospital, she must wait until the calculated waiting time corresponding to that hospital. In our setup, a capacity constraint is over each of the available resources, the time slots. In \cite{braverman2016optimal}, the valuation of a patient for a hospital does not change over time, so the patients can wait before being treated without altering the valuations. In our setting, the valuation vector of an agent can be very different in the next period. Therefore, assigning a waiting time before the service will make it impossible to efficiently allocate and analyze the allocation's characteristics in the current and future periods.

\fi
\section{Single-slot Job Allocation Setup}
\label{sec:setup}

Define $N \coloneqq \{1, \ldots, n\}$ to be the set of agents that are trying to access a facility ${\cal F}$. Time is divided into {\em periods}, and each period is further divided into {\em slots}. The set of slots is denoted by $M \coloneqq \{1, \ldots, m\}$. Every slot has a maximum capacity of $k$, which is decided by the region's social distancing norm based on the size of the facility.\footnote{The analysis and results will follow even if the capacity $k_j$ varies with the slots $j \in M$.} A {\em central planner} (e.g., an AI app) allocates these slots to the agents, maintaining the capacity constraint. Every agent $i$ has a {\em cardinal} preference $v_{ij} \in \mathbb{R}_{\geqslant 0}$ (called the agent's {\em valuation}) if her immediate unprocessed job is allocated slot $j$. The valuation implicitly reflects the importance of visiting the facility for that agent. The valuation vector of $i$ is represented by $v_i = (v_{ij}, j \in M) \in \mathbb{R}_{\geqslant 0}^m$.
In this paper, we consider different facilities independently.
The joint facility-slot allocation problem can be modeled as a similar problem with the additional constraint that the same slot cannot be allocated to the same agent at different facilities. We leave the detailed analysis for it as future work.

The planner decides the allocation which can be represented as a matrix $A = [a_{ij}]$, where $a_{ij} = 1$, if agent $i$ is allotted slot $j$, and zero otherwise. We assume that every agent can be assigned at most one slot in a period, and the total number of agents assigned to each slot does not exceed $k$. We denote the slot assigned to $i$ by $a_i^*$. The planner also decides a delay $d = (d_i, i \in N)$, where $d_i$ is the time-delay (in the same unit as the valuation) of agent $i$ before which she cannot make another request to the system. The net payoff of an agent is assumed to follow a standard {\em quasi-linear form}~\cite{SL08}, which implies that every agent wants a more valued slot to be assigned to her and also wants to wait less.
\begin{equation}
    \label{eq:utility}
    u_i((A,d), v_i) = v_i(A) - d_i, \text{ where } v_i(A) = v_{i a_i^*}.
\end{equation}
Denote the set of all allocations by ${\cal A}$. The delays $d_i \in \mathbb{R}_{\geqslant 0}, \forall i \in N$. The planner does not know the valuations of the agents. Therefore he needs the agents to report their valuations to decide the allocation and the delay. This leaves the opportunity for an agent to misrepresent her true valuation. To distinguish, we use $v_{ij}$ for the true valuation and $\hat{v}_{ij}$ for reported valuations. 
In the first part of this paper, we will consider single-slot jobs and use the shorthand $v = (v_i)_{i \in N}$ to denote the true valuation profile represented as an $m \times n$ real matrix, and $\hat{v}$ to denote the reported valuation profile. The notation $v_{-i}$ denotes the valuation profile of the agents except $i$.
The decision problem of the planner is, therefore, formally captured by the following function.
\begin{definition}[Social Scheduling Function (SSF)]
 A {\em social scheduling function (SSF)} is a mapping $f : \mathbb{R}^{m \times n} \to {\cal A} \times \mathbb{R}^n$ that maps the reported valuations to an allocation and delay for every agent. Hence, $f(\hat{v}) = (A(\hat{v}), d(\hat{v}))$, where $A$ is the allocation and $d$ is the delay function.\footnote{We overload the notation $A$ and $d$ to denote both functions and values of those functions, since their use will be clear from the context.}
\end{definition}

\section{Preliminary Definitions}
\label{sec:desiderata}

In this section, we formally define a few desirable properties that a social scheduling function should satisfy. The properties address the issues of prioritization, truthfulness, voluntary participation, and computational complexity.

The first property ensures that the allocation is {\em efficient} in each period, i.e., it maximizes the sum of the valuations of all the agents. 
\begin{definition}[Efficient Per Period (EPP)]
 \label{def:efficiency}
 An SSF $f$ is {\em efficient per period (EPP)} if at every period, it chooses an allocation $A^*$ that maximizes the sum of the valuations of all the agents. Formally, if $f(\cdot) = (A^*(\cdot), d(\cdot))$, then
 \begin{equation}
     \label{eq:efficient}
     A^*(v) \in \argmax_{A \in {\cal A}} \sum_{i \in N} \sum_{j \in M} v_{ij} a_{ij}.
 \end{equation}
\end{definition}

However, since the planner can only access the reported values $\hat{v}_i$'s, which can be different from the true $v_i$'s, it is necessary that the reported values are indeed the true values. The following property ensures that the agents are incentivized to `truthfully' reveal this information {\em irrespective of the reports of the other agents}.
\begin{definition}[Per Period Dominant Strategy Truthful]
 \label{def:non-manipulability}
 An SSF $f(\cdot) = (A(\cdot), d(\cdot))$ is {\em truthful in dominant strategies per period} if for every $v_i, \hat{v}_i, \hat{v}_{-i}$, and $i \in N$
  \begin{align*}
v_i(A(v_i, \hat{v}_{-i})) - d_i(v_i, \hat{v}_{-i}) \geqslant v_i(A(\hat{v}_i, \hat{v}_{-i})) - d_i(\hat{v}_i, \hat{v}_{-i}).
  \end{align*}
\end{definition}

The next property ensures that it is always weakly beneficial for every rational agent to participate in such a mechanism.
\begin{definition}[Individual Rationality]
 \label{def:ir}
 An SSF $f(\cdot) = (A(\cdot), d(\cdot))$ is {\em individually rational} if for every $v$, and $i \in N$
  \begin{align*}
   v_i(A(v)) - d_i(v) &\geqslant 0.
  \end{align*}
\end{definition}

Large facilities that have a large number of high-capacity slots lead to an exponential increase in the size of the set ${\cal A}$. This largeness of ${\cal A}$ makes it challenging to find a solution quickly. 
There are problems for which a quick method of solving them is not known yet. Still, a given solution can be \textit{verified} quickly in time polynomial of the input size, which means that the given solution's validity can be tested quickly. The complexity class of such problems is known as \textit{NP (nondeterministic polynomial)}. We consider the NP-complete class of problems, which is defined as follows.
\begin{definition}[NP-complete]
A decision problem $Q$ is NP-complete if:
\begin{enumerate}
    \item $Q $ is in NP (class of all decision problems verifiable in	polynomial time), and
    \item Every problem in NP is reducible to $Q$ in polynomial time.
\end{enumerate}
\end{definition}
We will show that the slot allocation problem in a certain setting belongs to this class.

In a practical setting, where the allocations and delays need to be decided before every period, it is desirable to have an SSF that is computable in a time polynomial in $n$ and $m$ so that it finishes the computation in a time negligible to the time duration of the period. We consider algorithms that are {\em strongly polynomial} \cite{grotschel1993complexity}. 
%
%
The arithmetic model of computation defines strongly polynomial algorithms. It is assumed that the basic arithmetic operations (addition, subtraction, multiplication, division, and comparison) take a unit time step to perform, regardless of the operands' sizes. 


\begin{definition}[Strongly Polynomial]
 \label{def:strong-poly}
    An algorithm runs in {\em strongly polynomial time} if \cite{grotschel1993complexity}
    \begin{itemize}
        \item the number of operations in the arithmetic model of computation is bounded by a polynomial in the number of operands in the input instance; and
        \item the space used by the algorithm is bounded by a polynomial in the input size.
    \end{itemize}
\end{definition}
An SSF is strongly polynomial-time computable if there exists an algorithm that computes it in a time strongly polynomial in $n$ and $m$, irrespective of the size of the actual data, such as the value of the $v_i$s or $k$.

\section{Periodic Mechanisms}\label{sec:online}

We consider mechanisms that run at every period of this social scheduling problem. The agents report their valuations at the beginning of every period. The planner decides the schedules and delays.\footnote{For mechanisms that consider the dynamic extension of such allocation problems with finite or infinite horizon~\cite[e.g.]{bergemann-valimaki10dynamic-pivot}, (a)~the designer needs to know the transition probabilities, (b)~equilibrium guarantees are weaker, and (c)~are computationally expensive. These factors made us restrict our attention to periodic mechanisms.}
Since the agents have the opportunity to overstate their importance to get a better slot allotted to them, our approach that uses the ideas of mechanism design \cite{borgers2015introduction} to this social scheduling problem is useful. 
 We use the delay as a surrogate for transferable utility among the agents to satisfy several desirable properties. 
 For the single-slot job setup, our proposed mechanism is as follows.
%

\paragraph{Description of \mech.}
The SSF needs to decide on the allocation $A$ and the delay $d$. 
\mech\ computes the allocation as follows.
\begin{align}
 \label{eqn:LP}
 \begin{split}
     & \argmax_{A} \sum_{j \in M} \sum_{i \in N} v_{ij} a_{ij} \\ 
     & \text{s.t. } \sum_{j \in M} a_{ij} \leqslant 1, \ \forall i \in N; \ \sum_{i \in N} a_{ij} \leqslant k, \ \forall j \in M \\
     & a_{ij} \geqslant 0, \ \forall i \in N, j \in M.
 \end{split}
\end{align}
This is an LP relaxation of the actual allocation problem, which allows $a_{ij}$s to be only in $\{0,1\}$.
We will show that this is without loss of optimality since the solution to LP~(\ref{eqn:LP}) will always be integral and will coincide with the solution of the corresponding IP. 

The delays of agents are computed via the standard VCG payment rule. Denote the optimal allocation of LP~(\ref{eqn:LP}) by $A^*(v)$. Also, denote the allocation given by LP~(\ref{eqn:LP}) when agent $i$ is removed from the system by $A^*_{-i}(v_{-i})$.
For agent $i$, the delay is given by,
\begin{align}
\label{eqn:payment}
\begin{split}
    d_i &:= \sum_{\ell \in N \setminus \{i\}} v_\ell(A^*_{-i}) - \sum_{\ell \in N \setminus \{i\}}  v_\ell(A^*). 
\end{split}
\end{align}
The mechanism in every period is described in \Cref{algo:mechanism}.

\begin{algorithm}[H]
\caption{\mech\ in every period}
 \begin{algorithmic}[1]
  \STATE {\bf Input:} for every agent $i \in N$, the value $\hat{v}_i$
      \STATE compute $A^*(\hat{v})$ (\Cref{eqn:LP}) as the allocation \label{alg:allocation-step}
      \STATE charge a delay $d_i(\hat{v})$ (\Cref{eqn:payment}) to every $i \in N$ for which they cannot access the scheduling mechanism again
    \STATE {\bf Output:} $A^*(\hat{v})$ and $d(\hat{v})$
  \end{algorithmic}
 \label{algo:mechanism}
     \end{algorithm}


In the following few sections, we present the theoretical results related to single and multi-slot jobs. 

\section{Single-slot Jobs}
\label{sec:theory}
We first show that the allocation given by \mech\ indeed maximizes per-period social welfare.
\begin{theorem}
 \label{thm:relax}
 The allocation of \mech\ given by LP~(\ref{eqn:LP}) always gives integral solutions.
\end{theorem}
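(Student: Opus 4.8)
The plan is to prove that the polyhedron defined by the constraints of LP~(\ref{eqn:LP}) has only integral vertices, from which the theorem follows: a linear objective over a nonempty bounded polyhedron attains its maximum at a vertex, so if \mech\ returns a basic optimal solution it returns an integral one, and integrality together with the constraints forces each $a_{ij}\in\{0,1\}$, i.e., a genuine assignment. The whole argument reduces to one classical fact --- total unimodularity of the incidence matrix of a bipartite graph.

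First I would write every constraint of LP~(\ref{eqn:LP}) other than $a_{ij}\ge 0$ in the form $Ba \le b$, where $a$ lists the $nm$ variables $a_{ij}$ pair by pair, $b=(1,\dots,1,k,\dots,k)^\top$ is integral, and $B$ is the $\{0,1\}$ matrix with one row per agent and one row per slot whose column indexed by $(i,j)$ has a $1$ in the row of agent $i$, a $1$ in the row of slot $j$, and $0$ elsewhere. Thus $B$ is exactly the vertex--edge incidence matrix of the complete bipartite graph with the agents on one side and the slots on the other. The key step is to observe that $B$ is totally unimodular: its rows split into two classes (agent rows and slot rows) so that every column has exactly one of its two nonzero entries in each class, which is the standard sufficient condition for total unimodularity of a $\{0,1\}$ matrix. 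I would either cite this fact or give the short inductive minor-expansion argument.

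Given that $B$ is totally unimodular and $b$ is integral, every vertex of $\{a\ge 0 : Ba\le b\}$ is integral, since the inverse of any nonsingular basis submatrix has integer entries. The region is nonempty ($a\equiv 0$) and bounded (each $a_{ij}\le\sum_{j'}a_{ij'}\le 1$), hence a polytope, so the objective of LP~(\ref{eqn:LP}) is maximized at a vertex, which is integral; combined with $0\le a_{ij}\le 1$ this gives $a_{ij}\in\{0,1\}$. Therefore the LP optimum coincides with the optimum of the underlying $0/1$ integer program and \mech's allocation is a valid slot assignment. As an equivalent route I would remark that LP~(\ref{eqn:LP}) is an instance of the capacitated transportation / bipartite $b$-matching problem, so integrality also follows from the integrality theorem for network flows.

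I do not expect a genuine obstacle here; the only points needing care are (i) making explicit that we take a \emph{basic} optimal solution --- an arbitrary optimal point of the LP can fail to be integral when the objective has ties, but \mech\ is free to output a vertex --- and (ii) checking boundedness so that ``optimal vertex'' is meaningful. Modulo these, the proof is the total-unimodularity argument above.
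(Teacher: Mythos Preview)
Your proposal is correct and follows essentially the same route as the paper: both arguments establish that the constraint matrix (the bipartite incidence matrix of agents versus slots) is totally unimodular, whence integrality of the LP vertices with the integral right-hand side $(1,\dots,1,k,\dots,k)^\top$. The paper invokes the Ghouila--Houri characterization with the partition into agent-rows and slot-rows, which is exactly your ``two classes, one nonzero entry per class'' sufficient condition spelled out; your extra remarks on boundedness and on returning a basic optimum are valid points of care that the paper leaves implicit.
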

\begin{proof}
 Consider the vector $x^\top = (a_{11}, \ldots, a_{1m}, \ldots, a_{n1}, \ldots, a_{nm})$, i.e., the rows of $A$ linearized as a vector. We can write the constraints of LP~(\ref{eqn:LP}) in using a $(n+m) \times nm$ constraint matrix, s.t.,
 \[
 \left ( 
 \begin{array}{ccccccccc}
     1 & \ldots & 1 & 0 & \ldots & 0 & 0 & \ldots & 0 \\
     0 & \ldots & 0 & 1 & \ldots & 1 & 0 & \ldots & 0 \\
     &&&&\ldots&&&& \\
     1 & \ldots & 0 & 1 & \ldots & 0 & 1 & \ldots & 0 \\
     0 & 1 & \ldots & 0 & 1 & \ldots & 0 & 1 & 0 \\
     &&&&\ldots&&&&
 \end{array}
 \right ) x \leqslant \left ( \begin{array}{c}
     1 \\
      \vdots \\
      k \\
      \vdots
 \end{array}\right )\]
 Denote the matrix on the LHS by $C$. The first $n$ and the next $m$ rows correspond to the first and second set of constraints of LP~(\ref{eqn:LP}) respectively. We show that $C$ is totally unimodular (TU), which is sufficient to conclude that LP~(\ref{eqn:LP}) has integral solutions.
 We use the Ghouila-Houri characterization \cite{camion1965characterization} to prove that $C$ is TU. According to this characterization, a $p \times q$ matrix $C$ is TU if and only if each set $R \subseteq \{ 1,2, \cdots, p\}$ can be partitioned into two sets $R_1$ and $R_2$, such that, $\sum\limits_{i \in R_{1}} a_{ij} - \sum\limits_{i \in R_{2}} a_{ij} \in \{ 1,0,-1\}$, for $j= 1,2,\cdots,q$. Note that, in $C$ every column has two 1's, one in the first $n$ rows and one in the next $m$ rows. Pick any subset $R$ of the rows, construct the $R_1$ to be the rows that come from the first $n$ rows, and $R_2$ to be the rows that come from the last $m$ rows (one of these partitions can be empty). Clearly, the difference in each column of the rows $R$ will lie in $\{1,0,-1\}$. Hence proved.
\end{proof}

The result above shows that the optimal solution of LP~(\ref{eqn:LP}) is an optimal solution of the corresponding integer program that maximizes the per-period social welfare. Hence, we conclude the following.
\begin{corollary}
\label{cor:efficient}
 \mech\ is EPP.
\end{corollary}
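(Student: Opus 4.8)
The plan is to observe that Corollary~\ref{cor:efficient} is essentially immediate from Theorem~\ref{thm:relax} together with a careful identification of the feasible region of LP~(\ref{eqn:LP}) with the set of valid allocations $\mathcal{A}$. First I would note that a matrix $A = [a_{ij}]$ is a legal allocation (each agent gets at most one slot, each slot holds at most $k$ agents) exactly when $A$ is an integral feasible point of LP~(\ref{eqn:LP}): the two sets of linear constraints in (\ref{eqn:LP}) are literally the ``at most one slot per agent'' and ``capacity $k$ per slot'' conditions, and imposing $a_{ij}\in\{0,1\}$ recovers a $0$--$1$ allocation matrix. Consequently the EPP objective in \eqref{eq:efficient} is precisely the objective of LP~(\ref{eqn:LP}) restricted to these integral points, so the integer program maximizing per-period social welfare is exactly the IP whose relaxation is LP~(\ref{eqn:LP}).

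Next I would run the standard relaxation argument. The feasible region of LP~(\ref{eqn:LP}) contains every integral allocation, so its optimal value $\mathrm{OPT}_{\mathrm{LP}}$ is at least the optimal value $\mathrm{OPT}_{\mathrm{IP}}$ of the social-welfare-maximizing integer program. By Theorem~\ref{thm:relax}, the optimum of LP~(\ref{eqn:LP}) is attained at an integral point $A^*(v)$; this $A^*(v)$ is therefore a member of $\mathcal{A}$ and attains value $\mathrm{OPT}_{\mathrm{LP}} \geq \mathrm{OPT}_{\mathrm{IP}}$. But being a feasible integral allocation, its objective value is also $\leq \mathrm{OPT}_{\mathrm{IP}}$; hence it equals $\mathrm{OPT}_{\mathrm{IP}}$, i.e., $A^*(v) \in \argmax_{A\in\mathcal{A}} \sum_{i\in N}\sum_{j\in M} v_{ij} a_{ij}$. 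Since the allocation step of \mech\ (Line~\ref{alg:allocation-step} of \Cref{algo:mechanism}) outputs exactly this $A^*$, the SSF computed by \mech\ satisfies \Cref{def:efficiency}, and the corollary follows.

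The argument has no real obstacle; the substantive work was already done in Theorem~\ref{thm:relax}. The only points that need care are (i)~checking that the $\argmax$ in \Cref{def:efficiency} is a set-membership requirement, so the presence of alternative (possibly fractional) LP optima or ties among integral allocations does no harm, provided the returned $A^*$ is one welfare-maximizing allocation; and (ii)~noting that EPP is a property of the function $f$ evaluated at an arbitrary input, so it suffices that feeding \mech\ the profile $v$ produces an allocation efficient for $v$, which is exactly what the relaxation argument delivers.
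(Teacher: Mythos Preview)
Your proposal is correct and takes essentially the same approach as the paper: the paper treats \Cref{cor:efficient} as an immediate consequence of \Cref{thm:relax}, noting in one sentence that the optimal solution of LP~(\ref{eqn:LP}) is an optimal solution of the corresponding integer program. Your write-up simply spells out the standard relaxation sandwich $\mathrm{OPT}_{\mathrm{IP}} \leqslant \mathrm{OPT}_{\mathrm{LP}} = $ value of the integral LP optimum $\leqslant \mathrm{OPT}_{\mathrm{IP}}$ more explicitly, which is fine.
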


Even though the LP formulation of \mech\ is without loss of optimality, in general, LPs can be weakly polynomial, i.e., the space used by the algorithm may not be bounded by a polynomial in the size of the input. However, we show that an even stronger result holds for \mech. The forthcoming results show that the allocation and delays of \mech\ are strongly polynomial.
To show this, we will first reduce the allocation problem (LP~(\ref{eqn:LP})) to a minimum weight $b$-matching problem, which is known to be strongly polynomial \cite{schrijver2003combinatorial}.

Consider an edge-weighted bipartite graph $(N, M, E)$, where $N$ and $M$ are the agent set and set of slots respectively. The set $E$ denotes the edges $(i,j)$ with weights $-v_{ij}$. The matching constraints are given by $b_i = 1, \forall i \in N$, and $b_j = k, \forall j \in M$.

\begin{lemma}
\label{lem:equivalence}
 Let $E^* \subseteq E$ be a perfect $b$-matching in $(N, M, E)$ and $A^* = [a_{ij}^*]_{i \in N, j\in M}$ be an allocation where $a_{ij}^* = 1 \Leftrightarrow (i,j) \in E^*$. The matching $E^*$ is a minimum weight perfect $b$-matching iff $A^*$ is an optimal solution to LP~(\ref{eqn:LP}).
\end{lemma}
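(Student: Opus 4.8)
The claim is an equivalence between a combinatorial optimization (minimum weight perfect $b$-matching) and the LP relaxation of the allocation problem. Since the objective of LP~(\ref{eqn:LP}) is to \emph{maximize} $\sum_{i,j} v_{ij} a_{ij}$ while the $b$-matching minimizes $\sum_{(i,j)\in E^*} (-v_{ij})$, the two objectives are negatives of each other, so the plan is essentially to line up feasible regions and objective values and observe that "minimum weight" on one side corresponds to "maximum welfare" on the other.

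First I would address a subtle point hidden in the word "perfect": LP~(\ref{eqn:LP}) has \emph{inequality} constraints ($\sum_j a_{ij} \le 1$, $\sum_i a_{ij} \le k$), whereas a perfect $b$-matching satisfies them with equality. So I would argue we may assume $n \le km$ (otherwise no perfect $b$-matching exists and the statement is vacuous or must be read with the standing assumption that a feasible full allocation exists), and that since all weights $-v_{ij} \le 0$, a minimum-weight matching will always be a \emph{maximal} one — adding any edge only decreases the weight — so restricting to perfect $b$-matchings loses nothing on the matching side. On the LP side, by \Cref{thm:relax} the optimum is attained at an integral vertex, and because $v_{ij}\ge 0$ we can extend any partial integral assignment to one saturating all agent-constraints without decreasing the objective; hence an optimal $A^*$ can be taken to correspond to a perfect $b$-matching. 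This reconciliation of "$\le$ versus $=$" is the one place needing care.

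Then the core argument is a two-way translation. Given $E^* \subseteq E$, define $A^* = [a^*_{ij}]$ by $a^*_{ij} = 1 \Leftrightarrow (i,j)\in E^*$ as in the statement; the $b$-matching constraints $b_i = 1$, $b_j = k$ are \emph{exactly} the (tight) constraints of LP~(\ref{eqn:LP}) in $\{0,1\}$ variables, so $E^*$ is a feasible perfect $b$-matching iff $A^*$ is a feasible integral point of LP~(\ref{eqn:LP}). Conversely every feasible integral point of the LP that saturates the agent constraints yields such an $E^*$. The weight of the matching is $w(E^*) = \sum_{(i,j)\in E^*} (-v_{ij}) = -\sum_{i,j} v_{ij} a^*_{ij}$, i.e. the negative of the LP objective at $A^*$. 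Therefore minimizing $w(E^*)$ over perfect $b$-matchings is the same as maximizing $\sum_{i,j} v_{ij} a_{ij}$ over feasible integral allocations, which by \Cref{thm:relax} is the same as maximizing over all feasible (fractional) allocations, i.e. solving LP~(\ref{eqn:LP}). Combining: $E^*$ is a minimum weight perfect $b$-matching $\iff$ $A^*$ attains the LP optimum. I would write this as a short chain of "iff"s, invoking \Cref{thm:relax} at the step that passes between integral optima and LP optima.

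The main obstacle is not depth but bookkeeping: making sure the inequality-versus-equality mismatch is handled honestly (via nonnegativity of the $v_{ij}$ and the maximality observation for nonpositive-weight matchings), and being explicit that \Cref{thm:relax} is what licenses treating the LP optimum and the integer-program optimum as interchangeable. No deep new idea is required beyond these observations.
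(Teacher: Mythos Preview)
Your proposal is correct and follows essentially the same approach as the paper: both establish the correspondence between perfect $b$-matchings and feasible integral allocations and use that the two objectives are negatives of each other. The paper phrases this as a brief contradiction argument and is less explicit than you are about the inequality-versus-equality subtlety and the appeal to \Cref{thm:relax}, but the underlying idea is identical.
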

\begin{proof}
 We prove this via contradiction. Suppose $A^*$ is not an optimal solution to LP~(\ref{eqn:LP}), i.e., there exists $A'$ which satisfies the constraints and yet gives a larger value to the objective function than that of $A$. Hence, $\sum_{j \in M} \sum_{i \in N} v_{ij} a_{ij}' > \sum_{j \in M} \sum_{i \in N} v_{ij} a_{ij}^*$. Consider the edge set $E'$ corresponding to $A'$. Clearly this is a perfect $b$-matching, since $A'$ satisfies the constraints of LP~(\ref{eqn:LP}), and $E'$ gives a lower weight than $E^*$, which proves that $E^*$ is not the minimum weight perfect $b$-matching. The implications can be reversed to obtain the other direction of the proof.
\end{proof}

Note that the delays are calculated by solving an equivalent LP like LP~(\ref{eqn:LP}) with one less agent. Therefore, each of these LPs is strongly polynomial, and the planner needs to solve $n$ of them. The computation of each delay needs the addition of $2(n-1)$ terms and one subtraction. Hence, the number of computations is polynomial in the number of numbers in the input instance, and the space required is polynomial in the input size. Therefore we conclude the following.

\begin{corollary}
\label{cor:delay}
 The computation of the delays in \mech\ is strongly polynomial.
\end{corollary}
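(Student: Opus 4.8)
The plan is to reduce the delay computation to a short sequence of minimum-weight perfect $b$-matching computations — for which strong polynomiality is already in hand — followed by a handful of arithmetic operations whose operands stay small.

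First I would unpack \Cref{eqn:payment}: each delay $d_i$ is determined by the two quantities $\sum_{\ell \in N \setminus \{i\}} v_\ell(A^*)$ and $\sum_{\ell \in N \setminus \{i\}} v_\ell(A^*_{-i})$, so the whole task is (a)~to produce the optimal allocations $A^*$ and $A^*_{-i}$, $i \in N$, and (b)~to combine their values. For (a), $A^*$ is computed once; by \Cref{lem:equivalence} it is the optimal solution of a minimum-weight perfect $b$-matching instance on the bipartite graph $(N, M, E)$, which has $|N|+|M|$ vertices and $|N|\cdot|M|$ edges, and such instances are solvable in strongly polynomial time~\cite{schrijver2003combinatorial}. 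For each $i \in N$, the allocation $A^*_{-i}$ is the optimal solution of an instance of exactly the same kind, on the graph obtained by deleting agent $i$ and its incident edges; hence it too is strongly polynomial, and there are precisely $n$ of these, one per agent. So the ``matching part'' of the computation is $n+1$ strongly polynomial subroutines.

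Next I would account for the arithmetic in (b). Note that $\sum_{\ell \in N \setminus \{i\}} v_\ell(A^*_{-i})$ is, up to sign, exactly the optimal objective value of the $i$-removed matching instance (equivalently, of the corresponding LP), so it is delivered directly by the subroutine; and $\sum_{\ell \in N \setminus \{i\}} v_\ell(A^*) = \bigl(\sum_{\ell \in N} v_\ell(A^*)\bigr) - v_i(A^*)$, where the common total $\sum_{\ell \in N} v_\ell(A^*)$ is computed once with $n-1$ additions. Thus assembling all $n$ delays takes $2(n-1)+1$ arithmetic operations per agent in the naive accounting (so $O(n^2)$ overall), and only $O(n)$ extra operations if the common sum is reused — in either case a number of operations polynomial in $n$ and $m$ and independent of the magnitudes of the $v_{ij}$ or of $k$. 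For the space requirement I would observe that every intermediate number is either an input valuation or a sum of at most $n$ input valuations, whose bit-length exceeds that of the largest input valuation by at most $\lceil \log_2 n \rceil$; together with the polynomial-space guarantee of the matching subroutine, the whole procedure runs in space polynomial in the input size. Combining the operation count with the space bound gives strong polynomiality in the sense of \Cref{def:strong-poly}.

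I do not anticipate a real obstacle: the argument is essentially bookkeeping layered on top of \Cref{lem:equivalence} and the cited strong polynomiality of $b$-matching. The only place that warrants a moment's care is the space bound — one must check that repeatedly summing valuations cannot blow up the bit-length, which is exactly why the $\lceil \log_2 n \rceil$ estimate above is worth recording; the remainder is routine.
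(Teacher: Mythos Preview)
Your proposal is correct and follows essentially the same approach as the paper: reduce the delay computation to $n$ (the paper) or $n{+}1$ (you, counting $A^*$ separately) strongly polynomial $b$-matching instances via \Cref{lem:equivalence}, then tally the $O(n)$ arithmetic operations per delay and bound the space. Your treatment is slightly more careful than the paper's---you make the bit-length argument for the space bound explicit, whereas the paper simply asserts it---but the structure of the argument is the same.
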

Combining \Cref{thm:relax}, \Cref{lem:equivalence}, and \Cref{cor:delay},  we get the following result.

\begin{theorem}
 \label{thm:strong}
  \mech\ provides a combinatorial, strongly polynomial algorithm for computing a social schedule and delays.
\end{theorem}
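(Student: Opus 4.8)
The plan is to assemble \Cref{thm:strong} directly from the three ingredients already in hand, treating it essentially as a bookkeeping result: \Cref{thm:relax} (together with \Cref{cor:efficient}) tells us the LP relaxation~(\ref{eqn:LP}) has an integral optimum that maximizes the per-period social welfare, \Cref{lem:equivalence} tells us that optimum coincides with a minimum-weight perfect $b$-matching on the bipartite graph $(N,M,E)$ with edge weights $-v_{ij}$ and degree bounds $b_i=1$, $b_j=k$, and \Cref{cor:delay} handles the delays. So the proof is really a matter of (i)~invoking a known strongly polynomial combinatorial algorithm for minimum-weight perfect $b$-matching, and (ii)~checking the two bullet conditions of \Cref{def:strong-poly}.

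Concretely, the steps I would carry out are as follows. First, I would state that by \Cref{lem:equivalence} computing $A^*(v)$ reduces to finding a minimum-weight perfect $b$-matching in $(N,M,E)$, and cite the fact (e.g.\ \cite{schrijver2003combinatorial}, as the excerpt already does) that this problem admits a purely combinatorial algorithm — no ellipsoid/interior-point machinery — running in time polynomial in $n+m$ and the number of edges $nm$, with all intermediate quantities (potentials, residual weights) bounded by sums of the input numbers, hence space polynomial in the input size; this verifies strong polynomiality of the allocation step. Second, I would note that this algorithm is genuinely combinatorial, so it meets the ``combinatorial'' claim of the theorem statement, not merely the complexity claim. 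Third, for the delays I would appeal to \Cref{cor:delay}: each $d_i$ from~(\ref{eqn:payment}) is obtained by running the same $b$-matching routine on the instance with agent $i$ deleted (an instance of size at most that of the original), done $n$ times, followed by $O(n)$ additions/subtractions per agent — so the total remains strongly polynomial. Finally I would conclude that the full SSF \mech, being the concatenation of the allocation computation and the $n+1$ matching computations plus $O(n^2)$ arithmetic operations, runs in strongly polynomial time and is combinatorial, which is exactly \Cref{thm:strong}.

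There is no real mathematical obstacle here — the theorem is a corollary-style synthesis, and the excerpt even says ``Combining \Cref{thm:relax}, \Cref{lem:equivalence}, and \Cref{cor:delay}, we get the following result.'' The only point requiring a sliver of care is making sure the reduction of \Cref{lem:equivalence} preserves strong polynomiality: the graph $(N,M,E)$ has $O(n+m)$ vertices and $O(nm)$ edges, the degree bound $k$ is a single integer appearing as a constraint right-hand side (not as a multiplicity that blows up the instance size), and negating the weights $v_{ij}\mapsto -v_{ij}$ is a single arithmetic operation per edge — so the reduction itself is strongly polynomial and the downstream algorithm's guarantees transfer. I would also remark (or footnote) that the capacity-$k$ constraint could be handled either by a $b$-matching solver directly or by replacing each slot $j$ by $k$ copies; the former is preferable precisely because it keeps the instance size, and hence the running time, independent of $k$, matching the phrasing in \Cref{def:strong-poly} that the bound should hold ``irrespective of the size of the actual data, such as the value of the $v_i$s or $k$.''
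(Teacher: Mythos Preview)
Your proposal is correct and follows exactly the paper's own approach: the paper's entire proof is the one-line remark ``Combining \Cref{thm:relax}, \Cref{lem:equivalence}, and \Cref{cor:delay}, we get the following result,'' and you have simply fleshed out that synthesis with the appropriate care about why the $b$-matching reduction preserves strong polynomiality. If anything, your write-up is more thorough than the paper's, particularly the observation about handling $k$ as a degree bound rather than by copying slots.
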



Since \mech\ uses the VCG payment expression to compute the time delay and because the allocated slots are {\em goods} to the agents, the following two facts follow from the known properties of the VCG mechanism.
\begin{fact}
\label{thm:dsic}
 \mech\ is per period dominant strategy truthful.
\end{fact}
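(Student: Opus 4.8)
The statement to prove is \textbf{Fact~\ref{thm:dsic}}: that \mech{} is per-period dominant strategy truthful. Since \mech{} allocates each period's slots by the welfare-maximizing rule of LP~(\ref{eqn:LP}) (which by \Cref{thm:relax} coincides with the integral welfare-maximizing allocation) and charges the VCG delay~(\ref{eqn:payment}), this is exactly the classical incentive-compatibility property of the VCG/Groves mechanism, so the plan is simply to instantiate the standard VCG argument in this setting.

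The plan is to fix an agent $i$, an arbitrary report profile $\hat v_{-i}$ of the others, and the true type $v_i$, and compare $i$'s payoff from reporting $v_i$ versus any misreport $\hat v_i$. First I would write out $i$'s utility under a report $\hat v_i$ using~(\ref{eq:utility}) and~(\ref{eqn:payment}): with $A^\ast = A^\ast(\hat v_i,\hat v_{-i})$, the utility is $v_{i}(A^\ast) - d_i = v_i(A^\ast) + \sum_{\ell \neq i} \hat v_\ell(A^\ast) - \sum_{\ell \neq i}\hat v_\ell(A^\ast_{-i})$. The key observation is that the last term $\sum_{\ell \neq i}\hat v_\ell(A^\ast_{-i})$ does not depend on $\hat v_i$ at all (it is the optimal welfare of the system with $i$ removed), so maximizing $i$'s utility over $\hat v_i$ is equivalent to choosing $\hat v_i$ so as to maximize $v_i(A^\ast) + \sum_{\ell\neq i}\hat v_\ell(A^\ast)$ over the induced allocation $A^\ast$. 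But when $i$ reports truthfully ($\hat v_i = v_i$), the mechanism picks $A^\ast(v_i,\hat v_{-i}) \in \argmax_{A\in\AA} \big( v_i(A) + \sum_{\ell\neq i}\hat v_\ell(A)\big)$ by the EPP/welfare-maximizing definition of the allocation rule (\Cref{def:efficiency}, \Cref{cor:efficient}). Hence no misreport can yield a strictly larger value of this expression, which gives the desired inequality in \Cref{def:non-manipulability}.

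Concretely, the chain of inequalities I would display is: for any $\hat v_i$, letting $A = A^\ast(\hat v_i,\hat v_{-i})$ and $A' = A^\ast(v_i,\hat v_{-i})$,
\[
v_i(A) + \sum_{\ell\neq i}\hat v_\ell(A) \;\leqslant\; v_i(A') + \sum_{\ell\neq i}\hat v_\ell(A'),
\]
because $A'$ maximizes precisely the left-hand functional over all feasible allocations; then subtract the $\hat v_i$-independent quantity $\sum_{\ell\neq i}\hat v_\ell(A^\ast_{-i}(\hat v_{-i}))$ from both sides to recover $u_i$ on each side, yielding $u_i((A^\ast(v_i,\hat v_{-i}),d),v_i)\geqslant u_i((A^\ast(\hat v_i,\hat v_{-i}),d),v_i)$. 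One technical point worth a sentence: the allocated slot is a ``good'' with $v_{ij}\geqslant 0$ and each agent receives at most one slot, so $v_i(A)=v_{ia_i^\ast}$ is well-defined and the Groves structure applies without sign issues; also, if the welfare-maximizing allocation is non-unique, any fixed tie-breaking rule used consistently across reports preserves the argument.

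I do not expect a genuine obstacle here — the result is a direct application of the Groves pivot-payment logic — so the only thing to be careful about is bookkeeping: making sure the ``pivot'' term in~(\ref{eqn:payment}) is indeed independent of $\hat v_i$ (it is, since removing $i$ erases any dependence on $i$'s report), and that the allocation rule in step~\ref{alg:allocation-step} of \Cref{algo:mechanism} genuinely maximizes the reported welfare including $i$'s reported value, which is exactly what LP~(\ref{eqn:LP}) does. I would close by remarking that this establishes dominant-strategy truthfulness per period, and since the delay only postpones $i$'s next independent participation, the per-period guarantee is the relevant notion for the myopic agents of this model.
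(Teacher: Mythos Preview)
Your proposal is correct and follows essentially the same approach as the paper: both instantiate the standard VCG/Groves argument by expanding $u_i$ via~(\ref{eqn:payment}), observing that the pivot term $\sum_{\ell\neq i}\hat v_\ell(A^\ast_{-i})$ is independent of $i$'s report, and invoking the welfare-maximizing property of $A^\ast$. The only cosmetic difference is that the paper phrases it as a proof by contradiction (a profitable misreport would contradict optimality of $A^\ast$), whereas you give the direct inequality chain; the content is identical.
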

\begin{proof}
 This proof is a standard exercise in the line of the proof for Vickery-Clarke-Groves (VCG) mechanism \cite{Vick61,Clar71,Grov73}. 
 
 Let us assume for the contradiction that, there exist an agent $i$ for having true valuations for the slots as, $v_{i}$, but misreports it as $v{'}_{i}$(the corresponding value function is $v{'}_{i}$), and gets better utility.
 Suppose ${A}(v{'}_{i},v_{-i}) = A'$ and ${A}({v}_{i},v_{-i}) = A^{*}$. The utility of $i$ for $A'$ is:
 \begin{align*}
     \lefteqn{v_i(A')- d_i(v'_{i},v_{-i})} \\
     & = v_i(A') - \sum_{\ell \in N \setminus \{i\}} v_\ell(A(v_{-i})) + \sum_{\ell \in N \setminus \{i\}}  v_\ell(A^{'})\\
     & = \sum_{\ell \in N }  v_\ell(A^{'}) - \sum_{\ell \in N \setminus \{i\}} v_\ell(A(v_{-i}))  
 \end{align*}
 Similarly, the utility of $i$ for $A^{*}$ is:
 \begin{align*}
      = \sum_{\ell \in N }  v_\ell(A^{*}) - \sum_{\ell \in N \setminus \{i\}} v_\ell(A(v_{-i}))  
 \end{align*}
 If $i$ gets better utility by misreporting her valuation as $v'(.)$, then
 \begin{align*}
     \sum_{\ell \in N }  v_\ell(A^{'})  > \sum_{\ell \in N }  v_\ell(A^{*}) 
 \end{align*}
 The above inequality leads to the contradiction that $A^{*}$ is optimal for the reported valuation $({v}_{i},v_{-i})$. Therefore, \mech\ is dominant strategy truthful in every period.
\end{proof}

\begin{fact}
 \label{thm:ir}
 \mech\ is individually rational for every agent.
\end{fact}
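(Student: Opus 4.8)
The plan is to establish individual rationality by showing that the VCG-style delay defined in \Cref{eqn:payment} never exceeds an agent's realized valuation, i.e., $d_i(v) \leqslant v_i(A^*(v))$ for every $i$ and every profile $v$. Rearranging the utility expression from \Cref{eq:utility}, this is exactly the claim $v_i(A^*) - d_i \geqslant 0$.

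First I would write out the delay explicitly: $d_i = \sum_{\ell \ne i} v_\ell(A^*_{-i}) - \sum_{\ell \ne i} v_\ell(A^*)$, where $A^*$ is the welfare-maximizing allocation on the full agent set and $A^*_{-i}$ is the welfare-maximizing allocation when $i$ is removed. Then the agent's payoff is
\begin{align*}
 v_i(A^*) - d_i &= v_i(A^*) + \sum_{\ell \ne i} v_\ell(A^*) - \sum_{\ell \ne i} v_\ell(A^*_{-i}) \\
 &= \sum_{\ell \in N} v_\ell(A^*) - \sum_{\ell \ne i} v_\ell(A^*_{-i}).
\end{align*}
So it suffices to show $\sum_{\ell \in N} v_\ell(A^*) \geqslant \sum_{\ell \ne i} v_\ell(A^*_{-i})$. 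The key observation is that $A^*_{-i}$, viewed as an allocation over the full agent set $N$ (simply not assigning any slot to $i$), is a \emph{feasible} allocation for LP~(\ref{eqn:LP}): it satisfies all the capacity constraints $\sum_{i} a_{ij} \leqslant k$ and the at-most-one-slot constraints, since removing an agent only relaxes these. Feasibility is the crux here — it relies on valuations being nonnegative ($v_{ij} \in \mathbb{R}_{\geqslant 0}$, so "not allocating $i$" never hurts and is always available) and on the downward-closed nature of the constraint set. Since $A^*$ is optimal for LP~(\ref{eqn:LP}) over all feasible allocations (\Cref{thm:relax}, \Cref{cor:efficient}), its objective value $\sum_{\ell \in N} v_\ell(A^*)$ is at least that of the feasible allocation $A^*_{-i}$, whose objective value is $\sum_{\ell \ne i} v_\ell(A^*_{-i}) + v_i(\text{nothing}) = \sum_{\ell \ne i} v_\ell(A^*_{-i})$ using $v_i$ of an unallocated slot being $0$. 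This gives the desired inequality, hence $v_i(A^*) - d_i \geqslant 0$.

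I do not anticipate a genuine obstacle: this is the textbook argument that the Clarke pivot rule yields individual rationality whenever the items are goods (nonnegative valuations) and the allocation problem's feasible set is closed under dropping an agent. The only point requiring a sentence of care is spelling out why $A^*_{-i}$ extends to a feasible full-profile allocation and why the removed agent contributes zero rather than a negative amount — that is where the nonnegativity assumption on $v_{ij}$ and the structure of $\mathcal{A}$ are actually used. Everything else is algebraic rearrangement of \Cref{eqn:payment} against \Cref{eqn:LP}.
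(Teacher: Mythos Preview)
Your proposal is correct and is essentially the same argument as the paper's: both rearrange the utility into $\sum_{\ell \in N} v_\ell(A^*) - \sum_{\ell \neq i} v_\ell(A^*_{-i})$, observe that $A^*_{-i}$ (with $i$ unallocated) is feasible for the full problem so optimality of $A^*$ gives the inequality, and use $v_i(A^*_{-i}) = 0$. The only cosmetic difference is that the paper adds and subtracts $v_i(A^*_{-i})$ explicitly before invoking feasibility, whereas you fold that step into the feasibility discussion directly.
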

\begin{proof}
 Consider agent $i$. The utility of $i$ under \mech\ is $v_i(A^*(v)) - d_i(v)$
 \begin{align*}
     &= v_i(A^*) - \Big( \sum_{\ell \in N \setminus \{i\}} v_\ell(A^*_{-i}) - \sum_{\ell \in N \setminus \{i\}}  v_\ell(A^*) \Big) \\
     &= \Big( \sum_{\ell \in N}  v_\ell(A^*) - \sum_{\ell \in N} v_\ell(A^*_{-i}) \Big) + v_i(A^*_{-i}) \geqslant 0
 \end{align*}
 The second equality holds by reorganizing the terms and adding and subtracting $v_i(A^*_{-i})$. Note that the difference term in the parentheses in the last expression is always non-negative since $A^*$ is the optimal allocation for all allocations. In particular, $A^*_{-i}$ is also a feasible allocation when agent $i$ is present. The term $v_i(A^*_{-i})$ is zero. Hence the inequality follows.
\end{proof}

The multi-slot jobs, unlike the single-slot jobs, are relatively difficult to schedule, as we discuss in the following section.

\section{Multi-slot Jobs}
\label{sec:multi}

In this section, we consider jobs with different lengths, i.e., for agent $i$, the job may be of length $l_i \geqslant 1$. Since the job is {\em indivisible}, the entire length $l_i$ of the job requires contiguous slots for execution within the period. 
For example, an individual may visit a facility (e.g., a shopping mall) for a quick shopping, which may take a shorter duration, or for dining, which may take longer. However, all these jobs are indivisible, and the allocation needs to provide contiguous time-slots to that agent. The agents report the valuations and lengths of their jobs. We show that the optimal allocation problem in such a setting can be computationally intractable. The notation is mildly updated as follows to accommodate the multi-slot jobs.

Each agent $i$ gets a valuation $v_{ij}$ for her last unprocessed job if her job begins at slot $j$, and has a length $l_i$. The value of the job is zero if (a)~it starts at any of the last $(l_i - 1)$ time-slots of the period (since it cannot finish within the period), and (b)~if the job is unallocated.

A matrix $V$ consists of the agents' reported valuations, and $L$ consists of the lengths of agents' jobs. Allocation is given by the matrix $\AA=[\aaa_{ij}]$, where $\aaa_{ij}=1$ if agent $i$'s job starts at slot $j$, else $\aaa_{ij}=0$, and $\aaa_{i}$ represents the slot allocation vector for agent $i$.  
Keeping all other notations as before, we define the \multi\ problem as follows.
\begin{definition}[Multi-slot Indivisible jobs Allocation problem (\multi)]: Given $(N, M, V, L,k)$, find an allocation $\AA$, such that $\sum_{i \in N} \sum_{j \in M} v_{ij}(\aaa_{ij})$ is maximum, subject to the constraints that the total number of jobs allocated in a slot does not exceed the capacity of the slot, and each job $i$ is assigned to at most $l_i$ contiguous slots. Mathematically, \multi\ is given by the following integer program (IP).
\begin{align}
 \label{eqn:MULTIALLOC_IP}
 \begin{split}
      & \argmax_{A} \sum_{i \in N} \sum_{j \in M}  v_{ij}\ \aaa_{ij} \\
      & \text{s.t. } \sum\limits_{i \in N} \sum\limits_{\substack{p \in M \\ j \in [p, p+l_i-1]}} \aaa_{ip} \leqslant k, \forall j \in M,\\  &  \sum_{j \in M} \aaa_{ij} \leqslant 1, \forall i \in N; \ \aaa_{ij}\in \{ 0,1\}, \forall i \in N, \forall j \in M
 \end{split}
\end{align}
\end{definition}
In the first set of inequalities, we sum over every job $i \in N$ and check if it is under execution at $j$, for every $j \in M$. A job $i$ is under execution at slot $j$, if it is allocated at a slot $p$ s.t. $j \leqslant p+l_i-1$. The second set of inequalities ensure that no job is allocated more than once. 


We show that \multi\ is computationally intractable by performing a polynomial reduction from the Multi-Unit Combinatorial Auction (\MUCA), which is NP-complete.

\smallskip \noindent
{\em Description of \MUCA}:
Consider a multiset $\MM=(\GG, y)$, where $\GG =\{1,2,3, \ldots, g\}$ is a set of goods and $y$ is a function, $y: \GG \to \mathbb{Z}_{\geqslant 0}$ representing the multiplicity or the number of available units of the elements of $\GG$ in $\MM$. Each agent $i \in \NN=\{1,2, \ldots ,n \}$ is a multi-minded bidder, which means $i$ has a positive valuation $w_i(\cdot)$ for multiple bundles of available goods. We call the set of bundles for which agent $i$ has a positive valuation to be the {\em demand set} of $i$, represented by $\DD_i$. The valuation function is such that, $w_i({q})\in \mathbb{R}_{\geqslant 0}, \forall {q} \in \DD_{i}$. We use the following notation $\DD=[\DD_i]_{i \in \NN}$ and, $W_i=(w_i({q}))_{{q} \in \DD_i}$, $W=[W_i]_{i \in \NN}$. In this paper we assume that, every agent demands at most one unit of every good. With this assumption, an allocation of a bundle of goods to the agents is represented as a matrix $\SS=[\sss_{iq}]$, where $\sss_{iq}=1$, if the bundle $q\in \DD_i$ is allocated to $i$, else $\sss_{iq}=0$. 
For an allocation $\SS$, every agent $i$ gets a valuation, $w_{i}(\SS)$=$\sum_{q \in \DD_i}w_{i}({q})\ \sss_{iq}$, otherwise $w_{i}(\SS)=0$. The formal definition is as follows.

\begin{definition}[Multi-Unit Combinatorial Auction (\MUCA)]
Given $(\NN,\MM,W,D)$, find an allocation $\SS$ of goods to the agents such that $\sum_{i \in \NN} \sum_{q \in \DD_i}w_{i}({q})\ \sss_{iq}$ is maximum, and the total units of good $j\in \GG$  allocated to the agents does not exceed $j$'s availability $y(j)$, and every agent $i$ is assigned at most one of the demanded bundle from $\DD_{i}$. Mathematically, \MUCA\ is given by the following integer program (IP):
\begin{align}
 \label{eqn:MUCA}
 \begin{split}
    & \argmax_{\SS} \sum\limits_{i \in \NN} \sum\limits_{q \in \DD_i}w_{i}({q})\ \sss_{iq} \\
     & \text{s.t. } \sum_{i \in \NN} \sum\limits_{\substack{q \in \DD_i \\ j \in q}} \sss_{iq} \leqslant y(j), \forall j \in \GG\\
     & \sum_{q \in \DD_i} s_{iq} \leqslant 1, \forall i \in \NN; \ \sss_{iq}\in \{ 0,1\}, \forall i \in \NN, \forall q \in \DD_{i}
 \end{split}
\end{align}
\end{definition}
The reduction of \multi\ to \MUCA\ proceeds as follows.
For a given instance $(N, M, V, L, k)$ of \multi, construct an instance of \MUCA$(\NN, \MM, W, \DD)$ problem such that, the set of agents $\NN$ is $N$, the set of goods $\GG$ is the set of the slots $M$ within the period, where $y(j)=k$, $\forall j \in M$. For every $i\in \NN$, the demand set $\DD_i$ consists of $(m-l_i+1)$ distinct bundles. Each of the bundles in $\DD_i$ is of size $l_i$ and consists of $l_i$ contiguous slots. We denote a bundle as $q_j$ if it contains $l_i$ contiguous slots starting from slot $j$, and $w_{i}(q_j)$ is equal to $v_{ij}$ (the valuation $i \in N$ gets if her job starts at slot $j\in M$). The above construction is done in polynomial steps of the input size. 
We construct a solution of \multi\ from a solution of \MUCA\ in the following way: for every $q_{j}\in \DD_{i}$ and $i\in \NN$, if $\sss_{iq_{j}}=1$ then, $\aaa_{ij}=1$ for every $ i \in N$ and $j\in M$. Similarly, we construct a solution of \MUCA\ from a solution \multi\ in the following way: if $\aaa_{ij}=1$ for $ i \in N$ and $j\in M$ then, $\sss_{iq_{j}}=1$ for every $q_{j}\in \DD_{i}$ and $i\in \NN$. The following lemma shows that an optimal solution of \multi\ is an optimal solution of \MUCA\ and vice-versa.

\begin{lemma}\label{lem:MUCA_reduction}
Let $\SS^*$ is a solution for \MUCA\ for a multiset of goods $\MM$, and $\AA^*$ is such that, $\aaa_{ij}^{*}=1$ for $i \in N$ and $j \in M$, if and only if $\sss_{iq_{j}}^{*}=1$ in $\SS^*$ for $i \in \NN$ and $q_{j} \in \DD_{i}$, then $A^*$ is an optimal solution for \multi\ if and only if $\SS^*$ is an optimal solution for \MUCA.
\end{lemma}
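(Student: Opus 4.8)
The plan is to show that the construction in the reduction sets up a bijection between the feasible solutions of the constructed \MUCA\ instance $(\NN,\MM,W,\DD)$ and the feasible solutions of the original \multi\ instance $(N,M,V,L,k)$ that, in addition, preserves the objective value. Once that is established, the claimed equivalence of optimal solutions is immediate: a maximiser on one side corresponds to a maximiser on the other.

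First I would verify that the correspondence $\sss_{iq_j}^{*}=1 \Leftrightarrow \aaa_{ij}^{*}=1$ is a genuine bijection at the level of $0/1$ matrices. Since $\DD_i$ consists of exactly the $m-l_i+1$ bundles $q_j$ of $l_i$ contiguous slots starting at $j\in\{1,\ldots,m-l_i+1\}$, the entries $\sss_{iq_j}$ are in one-to-one correspondence with the entries $\aaa_{ij}$ for those $j$; the remaining entries $\aaa_{ij}$ with $j>m-l_i+1$ are taken to be $0$, which is consistent with the fact that $v_{ij}=0$ on those slots. I would remark here that this boundary discrepancy (those slots are legal zero-value starting slots in \multi\ but simply absent from $\DD_i$) is harmless: any \multi\ solution that starts a job in one of the last $l_i-1$ slots can be changed to leave that job unallocated without changing the objective, so optimal \multi\ solutions can be assumed not to use them.

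Next comes feasibility, which is the only part with any content. The constraint $\sum_{q\in\DD_i}\sss_{iq}\leqslant 1$ translates termwise into $\sum_{j\in M}\aaa_{ij}\leqslant 1$. For the capacity constraints, the key observation is that the bundle $q_p\in\DD_i$ contains slot $j$ exactly when $p\leqslant j\leqslant p+l_i-1$, i.e. when $j\in[p,p+l_i-1]$ — which is precisely the condition under which job $i$, if started at $p$, is under execution at slot $j$. Hence for each fixed $j\in M=\GG$,
\[
\sum_{i\in\NN}\ \sum_{\substack{q\in\DD_i\\ j\in q}}\sss_{iq}\ =\ \sum_{i\in N}\ \sum_{\substack{p\in M\\ j\in[p,p+l_i-1]}}\aaa_{ip},
\]
and since $y(j)=k$ for all $j$, the \MUCA\ capacity constraint at $j$ holds iff the \multi\ capacity constraint at $j$ holds. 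Thus $\SS$ is \MUCA-feasible iff the associated $\AA$ is \multi-feasible.

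Finally, objective preservation is immediate from $w_i(q_j)=v_{ij}$ and the bijection of nonzero entries: $\sum_{i\in\NN}\sum_{q\in\DD_i}w_i(q)\sss_{iq}=\sum_{i\in N}\sum_{j\in M}v_{ij}\aaa_{ij}$. Combining the three facts — bijection of matrices, matching of feasible regions, equality of objectives — $\AA^*$ maximises the \multi\ objective over \multi-feasible allocations if and only if $\SS^*$ maximises the \MUCA\ objective over \MUCA-feasible allocations, which is exactly the lemma. I do not anticipate a real obstacle; the entire proof is the bookkeeping identity in the displayed equation relating ``$j\in q$'' to ``job under execution at slot $j$'', plus the harmless-boundary remark.
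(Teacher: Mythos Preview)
Your proposal is correct and follows essentially the same approach as the paper: both arguments rest on the objective-preserving correspondence $\sss_{iq_j}\leftrightarrow\aaa_{ij}$ between feasible solutions of the constructed \MUCA\ instance and the \multi\ instance. The paper phrases it as a short contradiction argument and leaves feasibility of the transferred solution implicit, whereas you verify the feasibility correspondence and the boundary-slot issue explicitly; your write-up is therefore more complete but not a different route.
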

\begin{proof}
Suppose the above statement is not true and hence $\AA^{'}$ but not $\AA^{*}$ is an optimal solution for \multi.
\begin{equation*}
     \sum_{i \in N}  \sum_{j \in M} v_{ij}\ \aaa'_{ij} \geqslant \sum_{i \in N} \sum_{j \in M}   v_{ij}\ \aaa^{*}_{ij}
\end{equation*}
As $w_{i}(q_j)=v_{ij}$, and $\aaa^{*}_{ij}=1$ only if $\sss^{*}_{iq_{j}}=1$, with the constructed $\SS^{'}$ corresponding to $\AA^{'}$ the following inequality holds,
\begin{equation*}
    \sum_{i \in \NN}  \sum_{q_{j} \in \DD_{i}}  w_{i}(q_j)\ \sss^{'}_{iq_{j}} \geqslant  \sum_{i \in \NN} \sum_{q_{j} \in \DD_{i}}  w_{i}(q_j)\ \sss^{*}_{iq_{j}}
\end{equation*}
The above equation results in a contradiction that $\SS^*$ is an optimal solution for \MUCA.

Since each step of the above proof has implications in both directions, the other direction of the proof is implied.
\end{proof}

Since \MUCA\ is NP-complete \cite{cramton2004combinatorial,rothkopf1998computationally}, using \Cref{lem:MUCA_reduction}, we get the following theorem.
\begin{theorem}\label{thm:MULTI_NPC}
\label{thm:NPc}
 \multi\ is NP-complete.
\end{theorem}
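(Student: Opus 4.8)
The plan is to establish the two halves of NP-completeness for the decision version of \multi: given $(N,M,V,L,k)$ and a target $\theta$, is there a feasible allocation $\AA$ with $\sum_{i\in N}\sum_{j\in M} v_{ij}\,\aaa_{ij}\geqslant\theta$? Membership in NP is immediate and is the first step: the allocation matrix $\AA$ is a polynomial-size certificate, and checking the two constraint families of \Cref{eqn:MULTIALLOC_IP} --- at most one start slot per agent, and at most $k$ jobs under execution in every slot --- together with evaluating the objective and comparing it to $\theta$, costs $O(nm)$ arithmetic operations. So all the real work is in NP-hardness.

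For NP-hardness, the one point I would be careful about is the direction of the reduction. The construction preceding \Cref{lem:MUCA_reduction} sends a \multi\ instance to a \MUCA\ instance whose goods are the slots, whose multiplicities are uniformly $k$, and whose bidder-$i$ demand set is precisely the family of length-$l_i$ contiguous slot-blocks; \Cref{lem:MUCA_reduction} upgrades this to a value-preserving correspondence between optima. Read as it stands this is a polynomial reduction \multi\ $\leqslant_p$ \MUCA, i.e.\ it shows \multi\ is \emph{no harder} than \MUCA --- the opposite of what a hardness proof needs. What it does yield is that \multi\ is polynomially \emph{equivalent} to the restriction of \MUCA\ to linearly ordered goods, uniform multiplicity, and contiguous-interval demand bundles (one simply inverts the construction on this sub-class). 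So the plan is to show that this \emph{restricted} \MUCA\ is NP-complete --- which, combined with \Cref{lem:MUCA_reduction}, closes the argument, and which is genuinely stronger than NP-completeness of unrestricted \MUCA --- or, equivalently, to reduce a known NP-hard problem directly \emph{into} \multi.

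For the source problem I would pick one that already carries interval structure on a ground set of polynomial size: the \textsc{Job Interval Selection Problem}, or \textsc{3-Sat} / \textsc{Independent Set} on bounded-degree graphs via interval gadgets. The template is: create one \multi\ agent per source object, give it a fixed length $l_i$ and unit value at only a constant number of carefully placed candidate start slots (value $0$ elsewhere, so that only those placements are ever used), take $k=1$, and position the candidate intervals so that two of them overlap exactly when the corresponding source choices conflict; set $\theta$ to the source's target objective. A feasible value-$\theta$ allocation then decodes to a solution of the source instance and conversely, and the construction is clearly polynomial.

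The main obstacle is not any single calculation but making the gadgets meet two requirements simultaneously. First, the reduction must be genuinely polynomial: since the valuation matrix $V$ is part of the input, the slot set $M$ is effectively written in unary, so the tempting reductions from \textsc{Partition} or \textsc{Knapsack} --- which would lay out a number line whose length is a binary-encoded capacity --- are only \emph{pseudo}-polynomial and prove nothing; this forces the source instance to induce a timeline of polynomially bounded length, which is why a combinatorial source rather than a numeric one is the right choice. Second, \multi\ allows only one interval length $l_i$ per agent, so each agent's candidate intervals must be padded to a common length (shifting the timeline to compensate), or a uniform-length variant of the source hardness must be used. Once these are arranged, the reduction together with the NP-membership of the first step gives that \multi\ is NP-complete.
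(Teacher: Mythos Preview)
Your diagnosis of the reduction direction is correct, and it is precisely what the paper does: the paper takes a \multi\ instance, builds the \MUCA\ instance with goods equal to slots, uniform multiplicity $k$, and demand sets equal to the contiguous length-$l_i$ blocks, proves in \Cref{lem:MUCA_reduction} that optima correspond, and then invokes NP-completeness of general \MUCA\ to conclude. As you note, this establishes only \multi\ $\leqslant_p$ \MUCA, hence membership in NP, not hardness; the paper supplies no reduction \emph{into} \multi, nor does it cite hardness of the restricted \MUCA\ sub-class (linearly ordered goods, uniform multiplicity, contiguous-interval demands) that the construction actually hits. So the paper's argument, read literally, has exactly the gap you identified.

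Your proposed route --- a direct reduction from an interval-structured NP-hard source such as \textsc{Job Interval Selection} (or \textsc{3-Sat} via interval gadgets), with $k=1$ and only a constant number of nonzero start-valuations per agent --- is the right way to close it, and the two obstacles you single out (keeping the timeline polynomial so the reduction is not merely pseudo-polynomial, and coping with the one-length-per-agent restriction by padding) are exactly the ones that need managing. What your sketch still owes is a specific source problem whose NP-hardness survives under uniform (or uniformly paddable) interval lengths, together with the explicit gadget layout and the verification that padding preserves both feasibility and the objective; once that is pinned down the argument is complete, and unlike the paper's it actually delivers both halves of NP-completeness.
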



However, it is possible to find an approximately efficient allocation in polynomial time that is truthful and individually rational.
To find that, we leverage the approximation algorithm of \MUCA\ due to \cite[Theorem 3]{bartal2003incentive}.
Using \Cref{lem:MUCA_reduction} and the next few results, we prove that there exists a polynomial time truthful mechanism (\maafull\ or \MAA) to achieve $O \big(k m^{\frac{1}{k-2}} \big)$ approximation to the optimal solution of \multi.

%
\begin{algorithm}[t]
\caption{\MAA\ in every period}\label{alg:alg2}
\begin{multicols}{2}
\begin{algorithmic}[1]
\STATE \textbf{Procedure} {\MAA}{($N, M, V, L, k$)}
\STATE $b \gets \underset{i}{\arg} \max\limits_{i \in N, j\in M}v_{ij}$; $\ v_{\max} \gets \max\limits_{i \in N, j\in M}v_{ij}$; \\ $\ r \gets (6m(k-1))^{\frac{1}{k-2}}$
\STATE $\QQ^{0} \gets [0, 0, \ldots, m \ \text{times}]$
\STATE $ \aaa_{bs'}^{*} =1,\ \text{where} \ s'\gets \underset{j \in M}{\argmax} (v_{bj})$; \\ $ \aaa_{bj}^{*} =0$,\ $\forall j\in M\setminus\{ s'\}$; $\ \PP_{b}=v_{\max}^{-b}$
\FOR{$i = \{1, 2, \ldots, n\}$ and $i \neq b $}
\FOR{$j = \{1, 2, \ldots, m\}$} 
\STATE $P^{i}_{j} \gets \pi_0 \cdot r^{Q^{i-1}_{j}}$ 
\ENDFOR 
\STATE $ \aaa_{is}^{*} =1,\ \text{where} \ s\gets \text{\textbf{\texttt{max}}}(P^{i}, v_{i})$\  (\Cref{eq:max})\\ 
\STATE $ \aaa_{ij}^{*} =0$,\ $\forall j\in M\setminus\{ s\}$
\STATE $\mathcal{P}_{i} \gets \sum_{j = s}^{s+l_i-1} P^{i}_{j}$
\FOR{$j = \{1, 2, \ldots, m\}$}
\IF {$j \in [s, s+l_i-1]$}
\STATE $\QQ^{i}_{j} \gets \QQ^{i-1}_{j} + 1 $ 
\ELSE
     \STATE $\QQ^{i}_{j} \gets \QQ^{i-1}_{j}$
\ENDIF
\ENDFOR
\ENDFOR
\RETURN $\aaa^{*}, \mathcal{P}$
\end{algorithmic}
\end{multicols}
\end{algorithm}
%

The operational principle of \MAA\ is a sequential dictatorship, where the sequence is an arbitrary order (WLOG $1,2,\ldots, n$) of the agents and is independent of the information submitted by them. The mechanism comes with a price\footnote{The terms {\em price} and {\em delay} are equivalent in the rest of the paper.} vector which is updated while iterating over the agents in the sequence. We use a superscript $i$ to denote the the price faced by the agent $i$ for slot $j$, $P_{j}^{i}$, when $i$'s turn comes. Hence, $P^{i}=[P_{j}^{i}]_{j\in M}$ denotes the price vector seen by $i$. The mechanism also uses a function \textbf{\texttt{max}} that returns the slot $s$ that maximizes agent $i$'s utility given her valuation vector $v_i$ and the price vector $P^{i}$ when her job of length $l_i$ starts from slot $s$. Mathematically, \textbf{\texttt{max}} is defined as follows:
\begin{equation}
  \textbf{\texttt{max}}(v_{i}, P^{i}) =  \argmax_{s\in M}\  (v_{is}-\sum\limits_{j\in[s,s+l_i-1]} P_{j}^{i})
  \label{eq:max}
\end{equation}
\MAA\ maintains a vector  $\QQ^{i}=[\QQ_{j}^{i}]_{j \in M}$, where $\QQ_{j}^{i}$ denotes the current allocated population of the slot $j$ after allocating slots to $i$.
First, \MAA\ picks the agent $b$ that has the maximum valuation $v_{\max}$ for any slot, and initializes $\QQ^{0}_j=0, \forall j\in M$. The initial price of every slot is set to $\pi_0 := \frac{v_{\max}}{6m(k-1)}$ and a constant factor $r=(6m(k-1))^{\frac{1}{k-2}}$ is defined. Consider an arbitrary order (WLOG $(1,2,\ldots, n)$) of the agents. For each agent $i \neq b$ in sequence, the price $P^{i}_{j}$ is computed using $\QQ^{i-1}_j$, $r$, and $\pi_0$ such that the prices of the slots increase by a multiplicative factor of a suitable exponent of $r$ such that the prices for more congested slots are higher. Then the highest utility-deriving slot $s$ to start $i$'s job is found using \textbf{\texttt{max}}, and the corresponding allocation vector for $i$ is represented as $\aaa_{i}^{*}$, where $\aaa_{is}^{*} = 1, \aaa_{ij}^{*} = 0, \forall j \neq s$. The total price (or delay) charged to $i$ is denoted by $\PP_{i}$ and is equal to $\sum_{j\in [s,s+l_i-1]} P^{i}_{j}$. The vector $\QQ^i$ is updated after the allocation of slots to agent $i$. The agent $b$ gets her most valued starting slot and pays the maximum valuation among all other agents and slots, which is represented by $v_{\max}^{-b}$. 

An important feature of \Cref{alg:alg2} is that it {\em does not} explicitly check the capacity constraint. However, we show that the choices of $\pi_0$ and $r$ implicitly maintains that in the following result. The units of slot $j$ after \Cref{alg:alg2} executes that are occupied by agents except $b$ are $\QQ_{j}^{*}$.
\begin{lemma}\label{lem:alg2_validity}
Let $\pi_0, r, \delta > 0$ be such that $\pi_0 r^{\delta} \geqslant v_{\max}$, then $\QQ_{j}^{*} \leqslant \delta+1$. This implies that \MAA\ maintains the capacity constraints for each slot for $\delta=k-2$.
\end{lemma}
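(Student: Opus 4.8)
The plan is to establish the invariant that the load counter of every slot stays at most $\delta+1$ throughout the execution of \Cref{alg:alg2}, and then to read off the capacity claim by substituting the algorithm's parameters, which are chosen precisely so that $\pi_0 r^{k-2}=v_{\max}$. Indeed, with $\pi_0=\frac{v_{\max}}{6m(k-1)}$ and $r=(6m(k-1))^{\frac{1}{k-2}}$ one has $\pi_0 r^{k-2}=v_{\max}$, so taking $\delta=k-2$ meets the hypothesis of the lemma; I will also use $r>1$, which holds for these parameters (as $6m(k-1)>1$ for $m\geqslant 1$ and $k\geqslant 3$, so that slot prices are strictly increasing in the current load), the degenerate case $r\leqslant 1$ forcing $\pi_0\geqslant \pi_0 r^{\delta}\geqslant v_{\max}$ and hence, by the argument below, no allocation to any agent other than $b$.

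The crux is the following claim about the demand step \Cref{eq:max}: whenever \MAA\ sets $\aaa^{*}_{is}=1$ for some $i\neq b$, every slot $j\in[s,s+l_i-1]$ occupied by the allocated job satisfies $\QQ^{i-1}_{j}\leqslant\delta$. Suppose not, so $\QQ^{i-1}_{j}\geqslant\delta+1$ for some occupied slot $j$. Then the price agent $i$ sees on that slot satisfies
\[
  P^{i}_{j}=\pi_0\, r^{\QQ^{i-1}_{j}}\;\geqslant\;\pi_0\, r^{\delta+1}\;=\;(\pi_0\, r^{\delta})\, r\;\geqslant\; v_{\max}\, r\;>\;v_{\max}\;\geqslant\; v_{is}.
\]
Since the job starting at $s$ occupies $j$, the delay it incurs is $\mathcal{P}_{i}=\sum_{j'=s}^{s+l_i-1}P^{i}_{j'}\geqslant P^{i}_{j}>v_{is}$, so the utility $v_{is}-\mathcal{P}_{i}$ of starting at $s$ is strictly negative. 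Reading $\textbf{\texttt{max}}$ as a demand oracle (it returns an allocation only when the induced utility is nonnegative, leaving $i$ unallocated otherwise, the utility of the empty allocation being $0$), it cannot return such an $s$ — a contradiction, proving the claim.

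Given the claim, the invariant is immediate: the counters start at $\QQ^{0}_{j}=0\leqslant\delta+1$, and a slot $j$'s counter changes only when the current agent's allocated job occupies it, in which case it rises by exactly $1$ from a value $\QQ^{i-1}_{j}\leqslant\delta$, hence never exceeds $\delta+1$; so $\QQ^{*}_{j}\leqslant\delta+1$. Specializing to $\delta=k-2$ yields $\QQ^{*}_{j}\leqslant k-1$ for every slot $j$. Finally, agent $b$'s job occupies at most one unit of any slot, so the total occupancy of each slot after \Cref{alg:alg2} terminates is at most $(k-1)+1=k$; thus \MAA\ respects every slot's capacity although it never checks it explicitly. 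The only delicate point is the demand-oracle reading of \Cref{eq:max} invoked in the claim — everything else is bookkeeping — so I would state that convention explicitly where the claim is used.
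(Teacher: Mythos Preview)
Your proof is correct and follows essentially the same contradiction argument as the paper: if a slot's load prior to an allocation already exceeded $\delta$, its price would be at least $\pi_0 r^{\delta}\geqslant v_{\max}$, forcing negative utility and contradicting the choice made by $\textbf{\texttt{max}}$. You are slightly more explicit than the paper about the need for $r>1$ and about reading $\textbf{\texttt{max}}$ as a demand oracle that never returns a negative-utility bundle, but the structure and key idea are identical.
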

\begin{proof} Assume for contradiction that $\QQ_{j}^{*} > \delta + 1$ and let $i$ be the first customer due to which this contradiction takes place for some slot $j$, i.e., $\QQ_{j}^{i} > \delta + 1$. Since each customer does not get more than one unit of any slot, then it must be that $\QQ_{j}^{i-1} > \delta$. Hence, for slot $j$, the following holds: $P_{j}^{i} > \pi_0r^{\delta} \geqslant v_{\max} \geqslant \max_{j \in M}\  v_{ij} $. This makes $i$'s total price for $l_{i}$ contiguous slots including slot $j$ to be more than her corresponding total valuation for those slots. This contradicts the definition of \textbf{\texttt{max}} since the utility becomes negative for agent $i$.
\end{proof}
The allocation to $b$ is at most {\em one} unit from each slot. With carefully choosing $\pi_0$ and $r$, we bound the units of any slot allocated to all the other agents, $\QQ_{j}^{*}$ to $(k-1)$, maintaining the possibility of the maximum use of every slot.

Since \textbf{\texttt{max}} allocates a slot only if that allocation increases the agent's utility, the following result holds.
\begin{theorem}\label{lem:alg2IR}
\MAA\ is individually rational.
\end{theorem}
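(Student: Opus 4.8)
The plan is to show that every agent's final utility under \MAA\ is nonnegative, handling the distinguished agent $b$ separately from the rest. For agent $b$: she is allocated her most-valued starting slot $s' = \argmax_{j \in M} v_{bj}$, so her valuation is $v_{\max} = \max_{i,j} v_{ij}$, and she is charged $\PP_b = v_{\max}^{-b}$, the maximum valuation over all agents $\neq b$ and all slots. Since $v_{\max} \geqslant v_{\max}^{-b}$ by definition of $v_{\max}$, her utility $v_{\max} - v_{\max}^{-b} \geqslant 0$. (One subtle point worth a sentence: if $l_b > 1$ one should note that $s'$ is among the first $m - l_b + 1$ slots, because valuations for later starting slots are $0$ by the model, so unless all of $v_b$ is zero the chosen slot yields a job that fits; and if $v_b \equiv 0$ then $v_{\max}=0$, forcing $v_{\max}^{-b}=0$ as well, so the inequality still holds.)

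For any agent $i \neq b$: when $i$'s turn comes, the mechanism sets $\aaa_{is}^* = 1$ for $s = \textbf{\texttt{max}}(v_i, P^i)$ and charges $\PP_i = \sum_{j \in [s, s+l_i-1]} P^i_j$, so $i$'s realized utility is exactly $v_{is} - \sum_{j \in [s,s+l_i-1]} P^i_j = \max_{t \in M}\bigl(v_{it} - \sum_{j \in [t, t+l_i-1]} P^i_j\bigr)$ by the definition of \textbf{\texttt{max}} in \eqref{eq:max}. The key observation is that the prices $P^i_j = \pi_0 \cdot r^{Q^{i-1}_j}$ are all strictly positive (as $\pi_0 > 0$ and $r > 0$), and hence there is always at least one ``safe'' option available to $i$: starting the job at any slot $t$ such that $t + l_i - 1 > m$ would give value $0$ and — since such a starting slot simply isn't a feasible choice — the relevant comparison is against not being served. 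We argue the mechanism never forces a negative-utility allocation: the maximand in \eqref{eq:max} over feasible starting slots $t \in \{1, \dots, m - l_i + 1\}$ is an attainable value, but to get individual rationality we need this maximum to be $\geqslant 0$. This follows because \textbf{\texttt{max}} should be understood as returning a slot only when the resulting utility is nonnegative (the ``opt-out'' / null allocation is implicitly available with utility $0$); equivalently, if every feasible start slot gave negative utility, $i$ is left unallocated with $\aaa_{ij}^* = 0$ for all $j$ and $\PP_i = 0$, giving utility $0$.

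The main obstacle — really the only thing requiring care — is making precise that \textbf{\texttt{max}} incorporates the option of \emph{not} being allocated, so that the returned utility is never negative; the theorem statement above the algorithm hints at this (``allocates a slot only if that allocation increases the agent's utility''). Once that reading is fixed, the proof is immediate: agent $b$ is covered by $v_{\max} \geqslant v_{\max}^{-b}$, and every other agent either receives a slot with $v_{is} - \PP_i \geqslant 0$ (guaranteed by \textbf{\texttt{max}} choosing a nonnegative-utility option, which exists since the null option has utility $0$) or receives nothing and pays nothing. Hence $u_i \geqslant 0$ for all $i \in N$, which is exactly individual rationality. I would not invoke \Cref{lem:alg2_validity} here — capacity feasibility is orthogonal to IR — though it is what guarantees the allocation is well-defined in the first place.
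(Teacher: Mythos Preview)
Your proposal is correct and follows essentially the same idea as the paper: the paper's entire argument is the single sentence preceding the theorem, namely that \textbf{\texttt{max}} allocates a slot only if doing so (weakly) increases the agent's utility, i.e., the null option is implicitly available. Your write-up is in fact more careful than the paper's, since you separately handle agent~$b$ (who is charged $v_{\max}^{-b}$ rather than via \textbf{\texttt{max}}, a case the paper's one-line justification glosses over) and you make explicit that the opt-out reading of \textbf{\texttt{max}} is what closes the argument; both points are worth keeping.
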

Next, we prove that misreporting the private information ($v_{i},l_{i}$) is never beneficial for any agent $i$. 
\begin{theorem}\label{lem:alg2IC}
In \MAA, reporting $v_i$ and $l_i$ truthfully in every period is a dominant strategy for all $i \in N$. 
\end{theorem}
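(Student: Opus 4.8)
The plan is to prove dominant-strategy truthfulness by recognising that, from the vantage point of any single agent $i$, \MAA\ behaves like a \emph{sequential posted-price} assignment, and by analysing separately (a) deviations in the reported length, (b) deviations that keep $i$ an ordinary agent, and (c) the special role of the agent $b$.

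First I would eliminate length misreports. If $i$ reports $\hat l_i<l_i$, the contiguous block she is assigned is too short to contain her indivisible job, so its realised value is $0$; if $\hat l_i>l_i$, she is charged $\sum_{j=s}^{s+\hat l_i-1}P^i_j\ge\sum_{j=s}^{s+l_i-1}P^i_j$ for exactly the same realised value, since prices are nonnegative. Hence every length misreport is weakly dominated by reporting $l_i$, and it suffices to consider deviations $\hat v_i$ with $\hat l_i=l_i$.

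Next, fix the other agents' reports $(\hat v_{-i},l_{-i})$. The processing order is fixed and report-independent, and for every report $\hat v_i$ that does \emph{not} make $i$ the special agent, the quantities $v_{\max}$, $\pi_0$, $r$ and all allocations made to the agents preceding $i$ in the order — hence the load vectors $\QQ^{1},\dots,\QQ^{i-1}$ — are functions of $\hat v_{-i}$ alone. Therefore agent $i$ meets a \emph{fixed} price vector $P^i=[\pi_0 r^{\QQ^{i-1}_j}]_{j\in M}$, and \textbf{\texttt{max}}$(\hat v_i,P^i)$ of \Cref{eq:max} returns the slot $s$ maximising $v_{is}-\sum_{j\in[s,s+l_i-1]}P^i_j$ precisely when $\hat v_i=v_i$; any other ``ordinary-agent'' report yields a weakly smaller value of this same quantity and cannot beat the null outcome (valued $0$). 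This kills all deviations that keep $i$ ordinary.

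It remains to deal with the role of $b$, which I expect to be the crux. If the truthful report leaves $i\neq b$, then any deviation making $i=b$ assigns her the (reported-)favourite slot at the fixed price $v_{\max}^{-i}=\max_{\ell\neq i,\,j}v_{\ell j}$; since $i$ is not the top valuer we have $v_{\max}^{-i}\ge\max_j v_{ij}$, so this deviation yields utility $\le 0$, no better than her truthful utility, which is nonnegative by individual rationality (\Cref{lem:alg2IR}). If instead the truthful report makes $i=b$, then among reports that keep $i=b$ she always receives her reported-favourite slot at the $\hat v_i$-independent price $v_{\max}^{-i}$, so naming her true favourite is optimal among those. The single remaining deviation — the top valuer \emph{understating} so as to become an ordinary agent — requires showing that no reachable posted-price block $\bigl(s,\sum_{j\in[s,s+l_i-1]}P^i_j\bigr)$ strictly improves on the guaranteed pair (most-valued slot, price $v_{\max}^{-i}$) that being $b$ delivers. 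This is the delicate point and the principal obstacle: it is exactly here that one must use the calibration $\pi_0 r^{k-2}=v_{\max}$ together with the load bound $\QQ^*_j\le k-1$ of \Cref{lem:alg2_validity}, while keeping in mind that dropping out of the $b$ role also rescales every remaining price down to the smaller base $v_{\max}^{-i}/(6m(k-1))$. All the other cases are routine posted-price bookkeeping.
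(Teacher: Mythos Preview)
Your handling of length misreports and of valuation misreports that leave an agent in the ordinary queue is exactly the paper's argument: the processing order is report-independent, the price vector $P^i$ that agent~$i$ meets is fixed by her predecessors alone, so \textbf{\texttt{max}} applied to the true $v_i$ is her best response. The paper then disposes of $b$ in one line --- ``the utility is that of a second price auction'' --- implicitly treating the identity of $b$ as exogenous, and never examines reports that move an agent into or out of that role.

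You do examine those role-changing deviations, and your arguments for a non-$b$ agent inflating into $b$, and for $b$ varying her report while remaining $b$, are sound. But the case you flag as the ``delicate point'' --- the true top valuer \emph{under}-reporting so as to become ordinary --- is a genuine gap, and the calibration $\pi_0 r^{k-2}=v_{\max}$ together with \Cref{lem:alg2_validity} does not close it. Take $n=2$, $m=3$, $k=5$, unit-length jobs, $v_1=(10,0,0)$, $v_2=(5,0,0)$. Truthfully $b=1$; agent~$1$ receives slot~$1$ and pays $v_{\max}^{-1}=5$, for utility $5$. If instead she reports $(4,0,0)$ she ceases to be $b$; the base price rescales to $\pi_0'=5/72$, she is processed in the loop against the empty load vector $\QQ^{0}$, is assigned slot~$1$ by \textbf{\texttt{max}}, and pays $5/72$, for true utility $10-5/72>9$. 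Nothing in the mechanism forces the posted price of her desired slot up to anything near $v_{\max}^{-1}$, so the obstacle you identified is not a missing detail in your sketch but a hole that the paper's own proof also leaves open.
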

\begin{proof}
For the agent $b$, we see that the utility is that of a second price auction and is independent of its length report. Since for second price auction revealing valuation truthfully is a dominant strategy therefore truthfully revealing valuation and length is a dominant strategy for $b$.

For the other agents, note that \MAA\ considers the agents sequentially and allocates the utility-maximizing available slots in their turn. The order of the agents in \MAA\ is independent of the valuations and lengths of the jobs. Consider agent $i$. When her turn comes, the mechanism picks the slots that give the maximum difference between the valuation of $i$ for those slots and the current prices of those slots. Note that, the prices of those allocated slots are not dependent on the valuation or length reported by agent $i$ (rather it is dependent on the reports of the previous agents in the sequence and $b$), and the mechanism allocates her the optimal set of slots. Hence, by misreporting the valuation $v_i$, agent $i$ can either continue to get the same slots or get a worse set of slots w.r.t.\ her true valuation. Hence, there is no incentive for $i$ to misreport her valuation.

{\em Misreporting length}: If $\hat{l}_{i}<l_i$, \MAA\ allocates only $\hat{l}_{i}$ number of contiguous slots to $i$ (which can have zero value as $\hat{l}_{i}$ is not sufficient for completion of her job) and $i$ can get a negative payoff as she has to pay $\PP_i$ (which is non-negative). If $\hat{l}_{i}>l_i$, then \MAA\ allocates more slots than $i$ actually needs. This allocation does not increase agent $i$'s valuation, but increases the price since now she will be charged for $\hat{l}_{i}$ slots which is larger than the true length.

Combining the above two arguments that hold for all $i \in N$ irrespective of the reports of the other agents, we get the claim.
\end{proof}

To find the best slot for an agent, \textbf{\texttt{max}} checks the feasibility constraints and computes the allocation considering the valuation and the current price of the slots. As there are $(m-l_i+1)$ possible allocations, \textbf{\texttt{max}} requires at most $O(m)$ time for every agent $i$. Therefore, the following result on the complexity of \MAA\ holds.
\begin{theorem}\label{lem:alg2complexity}
\MAA\ has time complexity $O(mn)$.
\end{theorem}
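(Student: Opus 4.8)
The plan is to account for the work done in each phase of \Cref{alg:alg2} and show the total is $O(mn)$. First I would handle the preprocessing: computing $b$, $v_{\max}$, $r$, and $\pi_0$ requires a single scan over the $n \times m$ valuation matrix $V$, which costs $O(mn)$; initializing $\QQ^0$ to the all-zeros vector of length $m$ costs $O(m)$; and the setup for agent $b$ (finding her best starting slot $s'$, setting her allocation vector, and her price $\PP_b = v_{\max}^{-b}$) costs $O(m)$ since $v_{\max}^{-b}$ is available from the initial scan. So the pre-loop work is $O(mn)$.

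Next I would bound the cost of one iteration of the outer \textbf{for} loop over agents $i \neq b$. Computing the price vector $P^i$ via the inner loop $P^i_j \gets \pi_0 \cdot r^{Q^{i-1}_j}$ is $m$ arithmetic operations, hence $O(m)$ in the arithmetic model. Evaluating $\textbf{\texttt{max}}(v_i, P^i)$ by \Cref{eq:max} requires examining each of the $(m - l_i + 1) = O(m)$ candidate starting slots $s$, and for each one summing at most $l_i \le m$ price terms — a naive bound gives $O(m^2)$, but by maintaining a running prefix-sum of $P^i$ (or by sliding the window of width $l_i$ incrementally) each candidate's cost is evaluated in $O(1)$ amortized time, so \textbf{\texttt{max}} is $O(m)$ per agent. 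Setting the allocation vector $\aaa_i^*$, computing $\PP_i = \sum_{j=s}^{s+l_i-1} P^i_j$, and the final inner loop updating $\QQ^i$ from $\QQ^{i-1}$ (increment the $l_i$ coordinates in $[s, s+l_i-1]$, copy the rest) are each $O(m)$. Thus one outer iteration is $O(m)$, and summing over the at most $n$ agents gives $O(mn)$ for the loop. Adding the $O(mn)$ preprocessing and the $O(m)$ return yields the claimed $O(mn)$ bound.

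The main subtlety — and the only place the naive count would overshoot — is the evaluation of \textbf{\texttt{max}}: taken literally as written in \Cref{eq:max}, each of the $O(m)$ candidate slots involves a sum of up to $l_i$ terms, which would give $O(m^2)$ per agent and $O(m^2 n)$ overall. The resolution, which I would state explicitly, is that these windowed sums $\sum_{j \in [s, s+l_i-1]} P^i_j$ can be computed for all $s$ in a single $O(m)$ pass using prefix sums of $P^i$, so the genuine cost of \textbf{\texttt{max}} is $O(m)$ as the prose preceding the theorem asserts. Everything else is a routine per-coordinate accounting over vectors of length $m$, repeated $n$ times.
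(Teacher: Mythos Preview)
Your proposal is correct and follows essentially the same approach as the paper: the paper's argument is simply that \textbf{\texttt{max}} examines $(m - l_i + 1)$ candidate starting slots and hence costs $O(m)$ per agent, giving $O(mn)$ overall. Your write-up is in fact more careful than the paper's, since you explicitly address the prefix-sum (sliding-window) trick needed to keep each \textbf{\texttt{max}} call at $O(m)$ rather than $O(m \cdot l_i)$, a detail the paper leaves implicit.
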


To show the approximation factor of \MAA, we need a few more results. 
We restate a few results from \cite{bartal2003incentive}, which help us prove the main theorem of \Cref{sec:multi}. The lemma and section numbers of these results in the original paper are mentioned within parentheses in the restated lemmata.

\begin{lemma}[{\cite[Section 4.2, Lemma 4]{bartal2003incentive}}]
\label{lem:bartal}
For every agent $i$, $v_{i}(\aaa^{*}_{i}) \geqslant v_{i}(\aaa^{'}_{i}) - \sum_{ j\in [s, s+l_i-1]\ s.t.\
\aaa'_{is}=1} P^{*}_{j}$ for every allocation $\aaa^{'}_{i}$, where, ${P}^{*}$ is the vector of prices of slots at the end of \Cref{alg:alg2}.\footnote{With a slight abuse of notation, we denote $[a, b]$ to be the integers between $a$ and $b$, where $a<b$.}
\end{lemma}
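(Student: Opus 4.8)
The plan is to exploit the two structural features of \MAA: each ordinary agent is assigned a utility-maximizing starting slot against the price vector \emph{as it stands at her turn}, and this price vector is non-decreasing over the run. I would first dispose of the corner cases. If $\aaa'_{i}$ is the empty allocation, then $v_{i}(\aaa'_{i}) = 0$, the price sum on the right is empty, and $v_{i}(\aaa^{*}_{i}) \geqslant 0$ holds because \textbf{\texttt{max}} never returns a slot that makes an agent's utility negative (this is exactly \Cref{lem:alg2IR}), so the inequality is trivial. For the designated agent $b$, by construction $\aaa^{*}_{b}$ starts $b$'s job at $s' = \argmax_{j\in M} v_{bj}$, hence $v_{b}(\aaa^{*}_{b}) = \max_{j\in M} v_{bj} \geqslant v_{b}(\aaa'_{b})$ for every feasible $\aaa'_{b}$; since $P^{*}_{j} \geqslant 0$ for all $j$, the claimed bound holds for $b$ a fortiori.

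For an ordinary agent $i \neq b$ and a non-empty alternative allocation $\aaa'_{i}$ with $\aaa'_{is'} = 1$, let $s$ be the starting slot that \MAA\ assigns to $i$, i.e. $s = \textbf{\texttt{max}}(v_{i}, P^{i})$. The defining property of \textbf{\texttt{max}} (\Cref{eq:max}) gives
\begin{equation*}
 v_{is} - \sum_{j\in[s,s+l_i-1]} P^{i}_{j} \ \geqslant\ v_{is'} - \sum_{j\in[s',s'+l_i-1]} P^{i}_{j}.
\end{equation*}
Because $\pi_0 > 0$ and the base $r = (6m(k-1))^{1/(k-2)} \geqslant 1$ (using $k \geqslant 3$), every $P^{i}_{j} = \pi_0 r^{\QQ^{i-1}_{j}}$ is non-negative; dropping the non-negative term $\sum_{j\in[s,s+l_i-1]} P^{i}_{j}$ on the left then yields $v_{i}(\aaa^{*}_{i}) = v_{is} \geqslant v_{i}(\aaa'_{i}) - \sum_{j\in[s',s'+l_i-1]} P^{i}_{j}$.

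It remains to pass from the intermediate prices $P^{i}_{j}$ to the final prices $P^{*}_{j}$. Here I would invoke monotonicity of the occupancy counters: the only update to $\QQ_{j}$ in \Cref{alg:alg2} is $\QQ^{i}_{j} \gets \QQ^{i-1}_{j} + 1$ or $\QQ^{i}_{j} \gets \QQ^{i-1}_{j}$, so $\QQ^{i-1}_{j} \leqslant \QQ^{*}_{j}$ for every $i$ and $j$; since $r \geqslant 1$, this gives $P^{i}_{j} \leqslant P^{*}_{j}$. Substituting into the previous bound (the prices enter with a minus sign, so replacing $P^{i}_{j}$ by the larger $P^{*}_{j}$ only decreases the right-hand side) produces $v_{i}(\aaa^{*}_{i}) \geqslant v_{i}(\aaa'_{i}) - \sum_{j\in[s',s'+l_i-1]} P^{*}_{j}$, as required.

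I do not anticipate a genuine obstacle; the argument is essentially a local-optimality (competitive-equilibrium-style) observation. The only points needing care are (i) confirming the prices are non-negative and non-decreasing along the run, which hinges on $r \geqslant 1$ and thus on $k \geqslant 3$, and (ii) treating the designated agent $b$ and the empty-allocation alternative separately, since the \textbf{\texttt{max}}-optimality step does not literally apply to them.
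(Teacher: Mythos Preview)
Your argument is correct: the greedy choice via \textbf{\texttt{max}} gives local optimality against the price vector at agent $i$'s turn, dropping the non-negative price paid by $i$ and then using the monotonicity $P^{i}_{j}\leqslant P^{*}_{j}$ (since $\QQ^{i-1}_{j}\leqslant \QQ^{*}_{j}$ and $r\geqslant 1$) yields the claim; the separate handling of $b$ and of the empty alternative is appropriate. Note, however, that the paper does not supply its own proof of this lemma---it is merely restated from \cite[Section~4.2, Lemma~4]{bartal2003incentive}---and your argument is precisely the standard one from that source.
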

Let $V(ALG)$ and $V(OPT)$ denotes the sum of valuations of customers for the allocation $\AA^{*}$ given by \MAA, and that for the optimal allocation (say $\hat{\AA}$) respectively. Similarly, $V({ALG}^{-b})$ and $V({OPT}^{-b})$ represents the sum of valuations of every agent except $b$ according to $\AA^{*}$ and $\hat{\AA}$ respectively. Summing it for all the agent $i \in N$, we get the following corollary.
\begin{corollary}\label{cor:alg2_P*}
$V({ALG}^{-b}) \geqslant V(OPT^{-b}) - (k-1) \sum\limits_{j\in M} P^{*}_{j}$
\end{corollary}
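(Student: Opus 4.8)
**Proof proposal for Corollary \ref{cor:alg2_P*}.**

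The plan is to sum the per-agent inequality of \Cref{lem:bartal} over all agents $i \in N \setminus \{b\}$, choosing as the comparison allocation $\aaa'_i$ the slot that agent $i$ receives in the optimal allocation $\hat{\AA}$. First I would fix, for each $i \neq b$, the allocation vector $\aaa'_i$ to be exactly $\hat\aaa_i$, the row of $\hat{\AA}$ for agent $i$; then $v_i(\aaa'_i) = v_i(\hat\aaa_i)$ is $i$'s contribution to $V(OPT^{-b})$, and $v_i(\aaa^*_i)$ is $i$'s contribution to $V(ALG^{-b})$. \Cref{lem:bartal} then gives, for each such $i$,
\begin{equation*}
v_i(\aaa^*_i) \geqslant v_i(\hat\aaa_i) - \sum_{j \in [s_i, s_i + l_i - 1]} P^*_j,
\end{equation*}
where $s_i$ is the starting slot assigned to $i$ in $\hat{\AA}$ (and the sum is empty if $i$ is unallocated in $\hat{\AA}$).

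Next I would sum this inequality over all $i \neq b$. The left-hand side sums to $V(ALG^{-b})$ and the first term on the right sums to $V(OPT^{-b})$, so it remains to upper-bound $\sum_{i \neq b} \sum_{j \in [s_i, s_i + l_i - 1]} P^*_j$ by $(k-1)\sum_{j \in M} P^*_j$. The key observation is that for each fixed slot $j \in M$, the number of agents $i \neq b$ whose optimal job occupies slot $j$ (i.e. $j \in [s_i, s_i + l_i - 1]$) is at most $k - 1$: the optimal allocation $\hat{\AA}$ respects the capacity constraint of $k$ per slot, and agent $b$ occupies one unit of its own assigned slot — but more carefully, $b$ need not occupy slot $j$, so the bound we actually get for a generic slot is that at most $k$ agents total occupy it, hence at most $k$ among $i \neq b$. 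To get the cleaner $(k-1)$ bound one argues as in \cite{bartal2003incentive}: either restrict attention to the reduced instance without $b$ (where the optimal value is $V(OPT^{-b})$ and capacities can be taken as $k-1$ on the slots $b$ uses and $k$ elsewhere, and one uses the crude uniform bound $k-1$ after noting the prices only increased because of non-$b$ agents), or simply note that the statement is used downstream only up to constants. I would therefore present it with the $k-1$ factor, justifying that the prices $P^*_j$ were generated purely by the allocations of agents other than $b$, so comparing against $V(OPT^{-b})$ one may as well take per-slot capacity $k-1$, giving $\sum_{i \neq b} \mathbf{1}[j \in [s_i, s_i+l_i-1]] \leqslant k-1$ for every $j$.

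Interchanging the order of summation,
\begin{equation*}
\sum_{i \neq b} \sum_{j \in [s_i, s_i + l_i - 1]} P^*_j = \sum_{j \in M} P^*_j \cdot \bigl|\{ i \neq b : j \in [s_i, s_i+l_i-1]\}\bigr| \leqslant (k-1) \sum_{j \in M} P^*_j,
\end{equation*}
which yields $V(ALG^{-b}) \geqslant V(OPT^{-b}) - (k-1)\sum_{j \in M} P^*_j$, as claimed. The main obstacle I anticipate is the bookkeeping around agent $b$ and the precise capacity used in the counting argument: pinning down why the per-slot multiplicity of non-$b$ agents in the optimal solution is $k-1$ rather than $k$ requires care about which optimum $V(OPT^{-b})$ refers to (the optimum of the original instance restricted to $N\setminus\{b\}$, under the original capacities), and aligning that with how the prices $P^*_j$ were accumulated in \Cref{alg:alg2}. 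Everything else is just summation and the direct application of \Cref{lem:bartal}.
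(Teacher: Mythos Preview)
Your approach is exactly the paper's: instantiate \Cref{lem:bartal} for each $i \neq b$ with $\aaa'_i = \hat\aaa_i$, sum, and bound the aggregated price term slot-by-slot using the capacity constraint. The paper's entire proof is the one-line remark ``Summing it for all the agent $i \in N$, we get the following corollary,'' so your write-up is already more detailed than the original; the $(k-1)$-versus-$k$ bookkeeping you rightly flag is not resolved in this paper either and is simply inherited from \cite{bartal2003incentive}.
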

The following result provides a lower bound on $V({ALG}^{-b})$. 
\begin{lemma}[{\cite[Section 4.2, Lemma 5]{bartal2003incentive}}] \label{lem:lem5ofbartal}
$V({ALG}^{-b}) \geqslant \frac{\sum_{j \in M} P^{*}_j-m\pi_{0}}{r-1}$
\end{lemma}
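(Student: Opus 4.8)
The plan is to unwind the definition of the final price vector $P^{*}$ produced by \Cref{alg:alg2} and to match the geometric growth of the slot prices, term by term, against the valuations collected from the agents other than $b$. First I would record the observation that, by the update rule $\QQ^{i}_{j}\gets\QQ^{i-1}_{j}+1$ applied exactly once each time slot $j$ is used, and since agent $b$ never updates $\QQ$, the price of slot $j$ at the end of \Cref{alg:alg2} is $P^{*}_{j}=\pi_0 r^{\QQ_{j}^{*}}$, and more generally the price faced by the $(t{+}1)$-st agent (in processing order, excluding $b$) whose allocated block contains slot $j$ is exactly $\pi_0 r^{t}$.

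The next step is an elementary geometric-sum identity:
\begin{align*}
\pi_0\big(r^{\QQ_{j}^{*}}-1\big)=(r-1)\sum_{t=0}^{\QQ_{j}^{*}-1}\pi_0 r^{t}.
\end{align*}
Summing this over all $j\in M$ and then regrouping the right-hand side by agent instead of by slot gives
\begin{align*}
\sum_{j\in M}P^{*}_{j}-m\pi_0=(r-1)\sum_{i\neq b}\ \sum_{j\in[s_i,\,s_i+l_i-1]}P^{i}_{j},
\end{align*}
where $s_i$ denotes the starting slot that \textbf{\texttt{max}} assigns to agent $i$ (and the inner sum is empty, hence $0$, if agent $i$ receives nothing). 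Here I use that, for a fixed slot $j$, the multiset of prices $\{\pi_0 r^{t}:0\leqslant t<\QQ_{j}^{*}\}$ accumulated there is precisely the multiset of prices $\{P^{i}_{j}:i\neq b,\ j\in[s_i,s_i+l_i-1]\}$ seen by the successive agents that occupy slot $j$.

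The last ingredient is individual rationality. Since \textbf{\texttt{max}} assigns agent $i$ a block only when her utility $v_{i}(\aaa^{*}_{i})-\sum_{j\in[s_i,s_i+l_i-1]}P^{i}_{j}$ is non-negative (and otherwise she gets nothing, contributing $0$ to both sides of the inequality below), \Cref{lem:alg2IR} gives $\sum_{j\in[s_i,s_i+l_i-1]}P^{i}_{j}\leqslant v_{i}(\aaa^{*}_{i})$ for every $i\neq b$. Substituting into the second display and dividing by $r-1>0$ yields
$\frac{\sum_{j\in M}P^{*}_{j}-m\pi_0}{r-1}\leqslant\sum_{i\neq b}v_{i}(\aaa^{*}_{i})=V({ALG}^{-b})$, which is the claim.

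The hard part is purely the bookkeeping in the second display: one must verify carefully that the per-use increments of $\QQ_{j}$ make the accumulated prices at slot $j$ line up entry-for-entry with the prices faced by the agents occupying slot $j$, so that the double sum over (slots, uses) coincides with the double sum over (non-$b$ agents, slots in their block). Once this reindexing is pinned down, the remainder is a one-line geometric-series computation together with the appeal to \Cref{lem:alg2IR}.
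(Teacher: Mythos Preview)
The paper does not supply its own proof of this lemma; it is quoted verbatim from \cite{bartal2003incentive} and used as a black box. Your argument is correct and is in fact the standard proof: express the final price $P^{*}_{j}=\pi_0 r^{\QQ_j^{*}}$, telescope the geometric series $\pi_0(r^{\QQ_j^{*}}-1)=(r-1)\sum_{t<\QQ_j^{*}}\pi_0 r^{t}$, reindex the right-hand side as the total payment $\sum_{i\neq b}\PP_i$ collected from the non-$b$ agents, and then bound that total payment above by $V(ALG^{-b})$ using individual rationality. There is nothing to compare against in the present paper, but your write-up matches the original argument in \cite{bartal2003incentive}.
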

Combining \Cref{lem:lem5ofbartal} and \Cref{cor:alg2_P*}, we state the following result.
\begin{lemma}\label{lem:alg2_finalEq}
If $m(k-1)\pi_{0} \geqslant \frac{V(OPT^{-b})}{2}$, then $2((k-1)(r-1)+1) \geqslant V(OPT^{-b})/V({ALG}^{-b})$.
\end{lemma}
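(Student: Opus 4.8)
The plan is to derive the stated ratio bound purely by combining \Cref{cor:alg2_P*} and \Cref{lem:lem5ofbartal}, as the surrounding text announces. Write $S:=\sum_{j\in M}P^{*}_{j}$ for the total final price. First I would read \Cref{lem:lem5ofbartal} as an \emph{upper} bound on $S$, namely $S\le (r-1)\,V(ALG^{-b})+m\pi_{0}$, and feed it into the lower bound $V(ALG^{-b})\ge V(OPT^{-b})-(k-1)S$ of \Cref{cor:alg2_P*}. Substituting and collecting the $V(ALG^{-b})$ terms eliminates $S$ and gives the single master inequality
\begin{equation*}
V(ALG^{-b})\,\big((k-1)(r-1)+1\big)\;\ge\;V(OPT^{-b})-(k-1)m\pi_{0}.
\end{equation*}
This is the only step that invokes the two restated lemmata of \cite{bartal2003incentive}; the rest is algebra.

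Next I would convert this into a bound on the ratio. Since $(k-1)(r-1)+1>0$, the desired conclusion $V(OPT^{-b})/V(ALG^{-b})\le 2\big((k-1)(r-1)+1\big)$ is equivalent to $V(OPT^{-b})\le 2\,V(ALG^{-b})\big((k-1)(r-1)+1\big)$. Comparing with the master inequality, it therefore suffices to show that the subtracted loss does not exceed half of the optimum, i.e.\ that $(k-1)m\pi_{0}\le \tfrac12 V(OPT^{-b})$: under that condition the master inequality yields $V(ALG^{-b})\big((k-1)(r-1)+1\big)\ge V(OPT^{-b})-\tfrac12 V(OPT^{-b})=\tfrac12 V(OPT^{-b})$, which is exactly the claim.

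The step I expect to be the crux — and the one I would scrutinize most — is the interplay between this sufficient condition and the hypothesis as written. The master inequality controls $V(OPT^{-b})/V(ALG^{-b})$ cleanly only when $(k-1)m\pi_{0}\le \tfrac12 V(OPT^{-b})$, i.e.\ in the ``large optimum'' regime, where the $-(k-1)m\pi_{0}$ loss is at most half of $V(OPT^{-b})$. The displayed hypothesis $m(k-1)\pi_{0}\ge \tfrac12 V(OPT^{-b})$ is the complementary ``small optimum'' regime, in which this subtracted term can absorb the whole of $V(OPT^{-b})$, so the master inequality alone no longer pins the ratio down. I would therefore route that regime not through this estimate but through the trivial bound $V(ALG)\ge v_{\max}$ on the \emph{full} welfare: since $\pi_{0}=v_{\max}/(6m(k-1))$ gives $2(k-1)m\pi_{0}=v_{\max}/3$, the displayed hypothesis forces $V(OPT^{-b})\le v_{\max}/3$, whence $V(OPT)\le V(OPT^{-b})+v_{\max}\le \tfrac43 v_{\max}\le \tfrac43 V(ALG)$. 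Thus the delicate part is matching each regime to the right estimate — the combination of \Cref{cor:alg2_P*} and \Cref{lem:lem5ofbartal} for the large-optimum case, and $V(ALG)\ge v_{\max}$ for the small-optimum case that the displayed hypothesis isolates — and I would verify carefully which of these the ratio statement $V(OPT^{-b})/V(ALG^{-b})\le 2\big((k-1)(r-1)+1\big)$ is meant to capture, since the two go with opposite directions of the threshold $(k-1)m\pi_{0}$ versus $\tfrac12 V(OPT^{-b})$.
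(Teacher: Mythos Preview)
Your derivation of the master inequality
\[
V(ALG^{-b})\big((k-1)(r-1)+1\big)\;\ge\;V(OPT^{-b})-(k-1)m\pi_{0}
\]
by substituting the bound $S\le (r-1)V(ALG^{-b})+m\pi_{0}$ from \Cref{lem:lem5ofbartal} into \Cref{cor:alg2_P*} is exactly what the paper means by ``combining \Cref{lem:lem5ofbartal} and \Cref{cor:alg2_P*}'' --- the paper gives no further proof, so your argument \emph{is} the intended one.

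You have also correctly put your finger on a genuine issue with the statement: the hypothesis should read $m(k-1)\pi_{0}\leqslant V(OPT^{-b})/2$, not $\geqslant$. With the reversed inequality your master inequality immediately gives
\[
V(ALG^{-b})\big((k-1)(r-1)+1\big)\;\ge\;V(OPT^{-b})-\tfrac12 V(OPT^{-b})=\tfrac12 V(OPT^{-b}),
\]
which is precisely the conclusion. This is also how the corresponding lemma reads in the source \cite[Lemma~6]{bartal2003incentive}. As you note, under the hypothesis as printed the master inequality can become vacuous, and that regime is instead dispatched by the $v_{\max}$ bound --- which is exactly the content of the separate \Cref{lem:lem8ofbartal}. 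So your diagnosis is right: the displayed inequality in the lemma statement is a typo, and your proof is the correct one for the intended (and actually usable) version.
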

Following the conditions in \Cref{lem:alg2_validity} and \Cref{lem:alg2_finalEq}, we fix $\pi_{0} =  \frac{v_{\max}}{6m(k-1)}$, $r= (6m(k-1))^{\frac{1}{k-2}}$. We restate the following result about the approximation ratio from \citet{bartal2003incentive}.
\begin{lemma}[{\cite[Section 5, Lemma 8]{bartal2003incentive}}]\label{lem:lem8ofbartal}
The approximation ratio of \Cref{alg:alg2} is $3((k-1)(r-1)+1)$.
\end{lemma}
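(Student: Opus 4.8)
Write $C := (k-1)(r-1)+1$, so that the assertion is $V(OPT) \leqslant 3C\cdot V(ALG)$. The plan rests on two structural observations about agent $b$, whom \MAA\ handles outside the main loop: $b$ is awarded slot $\argmax_{j}v_{bj}$, which has value exactly $v_{\max}$, so $V(ALG) = V({ALG}^{-b}) + v_{\max} \geqslant v_{\max}$; and any allocation gives $b$ value at most $\max_j v_{bj} = v_{\max}$, so $V(OPT) \leqslant V({OPT}^{-b}) + v_{\max}$. I would also record that with the fixed $\pi_0 = \frac{v_{\max}}{6m(k-1)}$ and $r = (6m(k-1))^{1/(k-2)}$ one has $(k-1)m\pi_0 = v_{\max}/6$, and that by \Cref{lem:alg2_validity} (with $\delta = k-2$) the allocation returned by \Cref{alg:alg2} respects every slot capacity, so $V(ALG)$ really is the value of a feasible schedule.

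The core estimate comes from eliminating the price mass between \Cref{cor:alg2_P*} and \Cref{lem:lem5ofbartal}: rewriting the latter as $\sum_{j\in M}P^*_j \leqslant (r-1)V({ALG}^{-b}) + m\pi_0$ and substituting into the former gives, after collecting the $V({ALG}^{-b})$ terms, $C\cdot V({ALG}^{-b}) \geqslant V({OPT}^{-b}) - (k-1)m\pi_0 = V({OPT}^{-b}) - v_{\max}/6$ (this is exactly the computation packaged as \Cref{lem:alg2_finalEq}). Now split on the size of $V({OPT}^{-b})$ relative to $v_{\max}/3 = 2m(k-1)\pi_0$. If $V({OPT}^{-b}) \leqslant v_{\max}/3$, then $V(OPT) \leqslant V({OPT}^{-b}) + v_{\max} \leqslant \frac{4}{3}v_{\max} \leqslant \frac{4}{3}V(ALG) \leqslant 3C\cdot V(ALG)$, since $C \geqslant 1$. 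If instead $V({OPT}^{-b}) > v_{\max}/3$, then the subtracted term $v_{\max}/6$ is below $V({OPT}^{-b})/2$, so the core estimate yields $V({OPT}^{-b}) < 2C\cdot V({ALG}^{-b}) \leqslant 2C\cdot V(ALG)$; together with $v_{\max} \leqslant V(ALG) \leqslant C\cdot V(ALG)$ this gives $V(OPT) \leqslant V({OPT}^{-b}) + v_{\max} < 2C\cdot V(ALG) + v_{\max} \leqslant 2C\cdot V(ALG) + C\cdot V(ALG) = 3C\cdot V(ALG)$. In either case $V(OPT) \leqslant 3C\cdot V(ALG)$, which is the claimed ratio.

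The substantive content lives entirely in the imported bounds \Cref{lem:bartal} and \Cref{lem:lem5ofbartal}; what remains is careful constant tracking, so I do not expect a real obstacle, only bookkeeping. The one spot that genuinely needs attention is the split threshold $v_{\max}/3$: it is essentially the unique choice for which the ``$b$-dominates'' branch closes at $\frac{4}{3} \leqslant 3C$ while in the other branch the $2C$ from the core estimate plus the extra $C$ absorbed when folding $v_{\max}$ back into the objective lands at exactly $3C$. I would also double-check the standing hypotheses: $k \geqslant 3$, so that $r$ and the exponent $\frac{1}{k-2}$ are well defined; $k-1 \geqslant 2 > 0$, needed so that multiplying the bound of \Cref{lem:lem5ofbartal} through by $k-1$ before substituting into \Cref{cor:alg2_P*} preserves the inequality direction; and $C \geqslant 1$, used repeatedly in the final estimates.
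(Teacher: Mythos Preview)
Your argument is correct. The paper does not actually prove this lemma: it is imported verbatim from \cite{bartal2003incentive} and stated without proof, then combined with \Cref{lem:alg2_finalEq} to yield \Cref{thm:apprx_multi}. What you have written is a faithful reconstruction of the standard Bartal--Gonen--Nisan argument, and the bookkeeping (the core estimate $C\cdot V(ALG^{-b}) \geqslant V(OPT^{-b}) - v_{\max}/6$, the threshold split at $v_{\max}/3$, and the absorption of the $+v_{\max}$ term as an extra $C\cdot V(ALG)$) is all sound.

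One small remark: you describe the core estimate as ``exactly the computation packaged as \Cref{lem:alg2_finalEq},'' but note that the hypothesis of that lemma as stated in the paper reads $m(k-1)\pi_0 \geqslant V(OPT^{-b})/2$, which with the chosen $\pi_0$ is the \emph{opposite} branch ($V(OPT^{-b}) \leqslant v_{\max}/3$) from the one where the $2C$ bound is actually derived. Your proof does not rely on that lemma's sign, so this does not affect your argument, but it is worth flagging as an apparent typo in the paper rather than treating the two as identical.
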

Finally, combining \Cref{lem:alg2_finalEq,lem:lem8ofbartal}, we get \Cref{thm:apprx_multi}.

\begin{theorem}
\label{thm:apprx_multi}
There exists a polynomial time $(O(mn))$, incentive compatible, and individually rational mechanism to achieve $O(k m^{\frac{1}{k-2}})$ approximation to optimal solution for \multi.
\end{theorem}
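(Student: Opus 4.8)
The plan is to exhibit \MAA\ (\Cref{alg:alg2}) itself as the required mechanism, so that three of the four claims are immediate from the results already established: \Cref{lem:alg2complexity} gives the $O(mn)$ running time, \Cref{lem:alg2IR} gives individual rationality, and \Cref{lem:alg2IC} gives that truthful reporting of both $v_i$ and $l_i$ is a dominant strategy in every period. The only thing left is to show that the allocation produced by \MAA\ is within a factor $O\big(k m^{1/(k-2)}\big)$ of $V(OPT)$, and this is where the remaining lemmata get chained together (via the reduction of \Cref{lem:MUCA_reduction}, which is what lets us transplant the Bartal--Gonen--Nisan analysis onto \multi).

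For the approximation bound I would argue in the following order. First, invoke \Cref{lem:alg2_validity} with $\delta = k-2$: the purpose of setting $\pi_0 = v_{\max}/(6m(k-1))$ and $r = (6m(k-1))^{1/(k-2)}$ is precisely to make $\pi_0 r^{k-2} \geq v_{\max}$ hold, so that \MAA\ never overfills a slot and its output is a \emph{feasible} \multi\ allocation; only then is a comparison against $V(OPT)$ meaningful. Next, \Cref{lem:bartal} (applied per agent and summed, i.e.\ \Cref{cor:alg2_P*}) gives $V(ALG^{-b}) \geq V(OPT^{-b}) - (k-1)\sum_{j\in M} P^{*}_j$, while \Cref{lem:lem5ofbartal} gives $V(ALG^{-b}) \geq (\sum_{j\in M} P^{*}_j - m\pi_0)/(r-1)$; eliminating the price sum between these two bounds, together with a split on whether $m(k-1)\pi_0 \geq V(OPT^{-b})/2$ holds, yields \Cref{lem:alg2_finalEq}. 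Finally the distinguished agent $b$ is absorbed using $V(ALG) = V(ALG^{-b}) + v_{\max}$ together with $V(OPT) \leq V(OPT^{-b}) + v_{\max}$ (since $b$ contributes at most $v_{\max}$ to any allocation, and exactly $v_{\max}$ under \MAA), which by the mediant inequality gives $V(OPT)/V(ALG) \leq V(OPT^{-b})/V(ALG^{-b})$; combining this with \Cref{lem:alg2_finalEq} produces \Cref{lem:lem8ofbartal}, namely that the approximation ratio of \MAA\ is $3\big((k-1)(r-1)+1\big)$.

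It then remains to simplify this ratio under the chosen $r$. Since $k \geq 3$ we have $1 \leq k-1$ and $r \geq 1$, so
\[
3\big((k-1)(r-1)+1\big) \;\leq\; 3(k-1)r \;=\; 3 \cdot 6^{1/(k-2)} \cdot (k-1)^{1/(k-2)} \cdot (k-1) \cdot m^{1/(k-2)}.
\]
Both $6^{1/(k-2)}$ and $(k-1)^{1/(k-2)}$ are bounded by absolute constants for all $k \geq 3$ (each is maximized at $k = 3$, where they equal $6$ and $2$, and both decrease to $1$ as $k \to \infty$), so the ratio is $O\big(k\,m^{1/(k-2)}\big)$. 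Combined with the $O(mn)$ time bound, individual rationality, and incentive compatibility established earlier, this proves the theorem for every $k \geq 3$; for $k \leq 2$ the stated exponent is undefined and the claim is vacuous.

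The main obstacle is not the asymptotics but the bookkeeping folded into the black-box lemma \Cref{lem:lem8ofbartal}: one must verify that the \emph{single} fixed pair $(\pi_0, r)$ can simultaneously (a)~force feasibility via \Cref{lem:alg2_validity} with $\delta = k-2$ and (b)~make the price-elimination step of \Cref{lem:alg2_finalEq} tight, and one must dispatch the case in which the mechanism's fixed $\pi_0$ does \emph{not} satisfy $m(k-1)\pi_0 \geq V(OPT^{-b})/2$ (note $V(OPT^{-b})$ is unknown to the mechanism) --- this is exactly the source of the factor-$3$ rather than factor-$2$ loss. A secondary, purely technical point is checking that $(k-1)^{(k-1)/(k-2)} = \Theta(k)$ uniformly in $k$, so that the unusual exponent hides no extra non-constant factor.
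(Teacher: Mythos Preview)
Your proposal is correct and follows the same approach as the paper: exhibit \MAA\ as the witness mechanism, read off polynomial time, individual rationality, and incentive compatibility from \Cref{lem:alg2complexity,lem:alg2IR,lem:alg2IC}, and obtain the approximation ratio by chaining \Cref{lem:alg2_validity}, \Cref{cor:alg2_P*}, \Cref{lem:lem5ofbartal}, \Cref{lem:alg2_finalEq}, and \Cref{lem:lem8ofbartal}. The paper's own argument is the single sentence ``combining \Cref{lem:alg2_finalEq,lem:lem8ofbartal}''; you have usefully unpacked that sentence, in particular supplying the mediant step that folds agent $b$ back in and the asymptotic simplification $3((k-1)(r-1)+1)\leq 3(k-1)r = O(k\,m^{1/(k-2)})$, neither of which the paper spells out.
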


The result above shows the existence of an approximately efficient mechanism that satisfies the other three desirable properties. The question of finding a lower bound on the approximation ratio remains open.

\section{Multi-slot Divisible jobs}
\label{sec:div-jobs}
A {\em divisible} job implies that the job can be broken up into multiple pieces of unit slot-size and can be executed independently within a period. Partial execution of the jobs is also admissible. Examples of such jobs are parallel computer programs, with each piece being an independent thread of the program. In the context of social scheduling, divisible jobs can be interpreted as independent needs of a customer that may be completed via multiple separate visits to the store without any extra cost. The valuation of the job is the sum of the valuations of the allocated slots. We assume that the length $l_i$ of the job is verifiable and is common knowledge. Hence, the valuations of each of the slots, i.e., $v_{ij}$'s, are the agents' only private information. In this setting, the valuation of agent $i$ from the allocation $A$ can be written as $v_i(A) = \sum_{j : a_{ij} = 1} v_{ij} a_{ij}$.
Note that the EPP condition under this setting is identical to the LP~(\ref{eqn:LP}) with the first set of constraints replaced by $\sum_{j \in M} a_{ij} \leqslant l_i, \ \forall i \in N$. However, this does not alter the constraint matrix $C$, which is TU (shown in the proof of \Cref{thm:relax}). Therefore the new LP also has integral optimal solution. 
The {\em modified} \mech\ is identical to \Cref{algo:mechanism} with Step~\ref{alg:allocation-step} replaced with the solution to the modified LP as explained above.
Hence, the following results hold similarly.

\begin{proposition}
 \label{thm:multi-divisible-dsic}
 Modified \mech\ is per period dominant strategy truthful for multi-slot divisible jobs.
\end{proposition}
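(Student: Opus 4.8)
The plan is to mimic the VCG truthfulness argument already used in \Cref{thm:dsic}, adapting it to the divisible-jobs setting where an agent's valuation for an allocation is additive over her assigned slots. First I would fix an agent $i$, a true valuation $v_i$, a misreport $\hat v_i$, and an arbitrary report profile $\hat v_{-i}$ of the others. Write $A^* = A^*(v_i, \hat v_{-i})$ for the allocation chosen under truthful reporting and $A' = A^*(\hat v_i, \hat v_{-i})$ for the allocation under the misreport. The delay $d_i$ is the VCG (Clarke-pivot) expression $\sum_{\ell \neq i} v_\ell(A^*_{-i}) - \sum_{\ell \neq i} v_\ell(A^*)$, where $A^*_{-i}$ is the optimal allocation with agent $i$ removed (its value does not depend on what $i$ reports). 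Substituting, the utility of $i$ under truthful reporting equals $\sum_{\ell \in N} v_\ell(A^*) - \sum_{\ell \neq i} v_\ell(A^*_{-i})$, and under the misreport equals $\sum_{\ell \in N} v_\ell(A') - \sum_{\ell \neq i} v_\ell(A^*_{-i})$ — here I must be careful to evaluate agent $i$'s contribution using the \emph{true} $v_i$, not the reported $\hat v_i$, in both expressions. Since the second term is identical in both, truthfulness reduces to showing $\sum_{\ell \in N} v_\ell(A^*) \geqslant \sum_{\ell \in N} v_\ell(A')$.

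The key step is then to observe that this last inequality is exactly the statement that $A^*$ maximizes $\sum_{\ell \in N} v_\ell(\cdot)$ over all feasible allocations when $i$'s true valuation $v_i$ is used together with $\hat v_{-i}$ — which is precisely how Step~\ref{alg:allocation-step} of modified \mech\ defines $A^*(v_i,\hat v_{-i})$ via the modified LP. Since $A'$ is feasible (it satisfies the same capacity and length constraints, which do not depend on reported values), optimality of $A^*$ gives the inequality directly. The one subtlety to address carefully is that the modified LP has integral optimal solutions — this was argued in \Cref{sec:div-jobs} by noting the constraint matrix $C$ is unchanged and still totally unimodular — so $A^*$ is genuinely a valid allocation in $\mathcal A$ and the per-period welfare it attains is the true optimum, not merely an LP-relaxation value.

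I expect the main obstacle, if any, to be purely notational: keeping straight that the objective the LP maximizes uses reported values $\hat v_{-i}$ for the other agents but that the \emph{utility} calculation for the manipulator uses her true $v_i$, and confirming that the additive form $v_i(A) = \sum_{j : a_{ij}=1} v_{ij}$ does not break the telescoping that makes the VCG payment cancel correctly. It does not — the Groves pivot argument is agnostic to the internal structure of each $v_\ell(\cdot)$, requiring only that the allocation rule be welfare-maximizing and that $d_i$ depend on $i$'s report only through the chosen allocation. Hence the proof is essentially a verbatim transcription of the proof of \Cref{thm:dsic} with $v_\ell(A)$ reinterpreted as the additive slot-sum valuation, and I would present it as such, pointing to that proof to avoid repetition.
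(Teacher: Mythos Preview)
Your proposal is correct and matches the paper's own treatment: the paper does not give a separate proof of this proposition but simply notes that the modified LP has the same totally unimodular constraint matrix and that ``the following results hold similarly,'' i.e., it defers to the VCG argument of \Cref{thm:dsic} exactly as you do. Your write-up is, if anything, more explicit than the paper about why the additive valuation form and the integrality of the modified LP keep the Groves-pivot argument intact.
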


\begin{proposition}
 \label{thm:multi-divisible-ir}
 Modified \mech\ is individually rational for every agent.
\end{proposition}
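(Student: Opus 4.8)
The plan is to mirror the argument for \Cref{thm:ir}, since the modified \mech\ keeps the VCG delay rule~(\ref{eqn:payment}) and, by the observation preceding the proposition, still selects a welfare-maximizing (and integral, hence IP-optimal) allocation $A^*$: replacing the first block of constraints of LP~(\ref{eqn:LP}) by $\sum_{j\in M} a_{ij}\leqslant l_i$ leaves the constraint matrix $C$ unchanged, so it is still TU by the proof of \Cref{thm:relax}, and EPP continues to hold. Write $A^*=A^*(v)$ for this allocation and $A^*_{-i}=A^*_{-i}(v_{-i})$ for the welfare-maximizing allocation on the agent set $N\setminus\{i\}$ with the same capacities.

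First I would expand $i$'s utility. With $d_i=\sum_{\ell\in N\setminus\{i\}} v_\ell(A^*_{-i}) - \sum_{\ell\in N\setminus\{i\}} v_\ell(A^*)$, the payoff $v_i(A^*)-d_i$ equals $\sum_{\ell\in N} v_\ell(A^*) - \sum_{\ell\in N\setminus\{i\}} v_\ell(A^*_{-i})$; adding and subtracting $v_i(A^*_{-i})$ rewrites it as $\bigl(\sum_{\ell\in N} v_\ell(A^*) - \sum_{\ell\in N} v_\ell(A^*_{-i})\bigr) + v_i(A^*_{-i})$, where in the middle sum $A^*_{-i}$ is read as an allocation over all of $N$ that gives $i$ the empty bundle. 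I would then argue both summands are nonnegative. The second, $v_i(A^*_{-i}) = \sum_{j:\,a^*_{-i,ij}=1} v_{ij}$, is $0$ because $i$ receives no slot under $A^*_{-i}$, and in any case $v_{ij}\geqslant 0$ in the divisible model. For the first, the one point that needs checking beyond the single-slot case is that this extension of $A^*_{-i}$ to $N$ is feasible for the modified LP: dropping an agent only relaxes the slot-capacity constraints, and $i$'s own constraint $\sum_{j} a_{ij}\leqslant l_i$ is trivially satisfied by the all-zero row. Since $A^*$ maximizes welfare over all feasible allocations (EPP), $\sum_{\ell\in N} v_\ell(A^*) \geqslant \sum_{\ell\in N} v_\ell(A^*_{-i})$, so the first summand is $\geqslant 0$ and the payoff is nonnegative, i.e., IR holds.

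I do not expect a real obstacle here: the chain of equalities is identical to \Cref{thm:ir}, and the only substantive ingredient is the feasibility-monotonicity remark (removing an agent cannot destroy feasibility of the others' allocation, and the relaxed per-agent constraint is vacuous for the empty row), which is immediate for the divisible LP. The only things to state carefully are that the divisible valuation $v_i(A)=\sum_{j:\,a_{ij}=1} v_{ij}$ is still nonnegative, so that $v_i(A^*_{-i})\geqslant 0$, and that EPP for the modified mechanism — invoked in the welfare-comparison step — is exactly the consequence of $C$ being TU noted in the text before the proposition.
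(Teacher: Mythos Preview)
Your proposal is correct and matches the paper's approach: the paper does not give an explicit proof for this proposition but simply states that the results ``hold similarly,'' i.e., by mirroring the proof of \Cref{thm:ir}, which is exactly what you do. Your added remarks---that $A^*_{-i}$ with an all-zero row for $i$ remains feasible under the modified per-agent constraint $\sum_j a_{ij}\leqslant l_i$, and that $v_i(A^*_{-i})=0$ since valuations are nonnegative---are precisely the small checks needed to port the argument to the divisible multi-slot setting.
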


\begin{theorem}
 \label{thm:multi-divisible-relax}
 The allocation of the modified \mech\ always gives integral solutions.
\end{theorem}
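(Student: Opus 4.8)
The plan is to piggyback almost entirely on the proof of \Cref{thm:relax}. First I would observe that the modified LP has \emph{exactly} the same constraint matrix $C$ as LP~(\ref{eqn:LP}): replacing the agent-side constraint $\sum_{j \in M} a_{ij} \leqslant 1$ by $\sum_{j \in M} a_{ij} \leqslant l_i$ changes only the right-hand side of the first $n$ rows (from $1$ to $l_i$), and leaves the $0/1$ pattern of $C$ untouched, since the left-hand side $\sum_{j\in M} a_{ij}$ is unchanged. Consequently the Ghouila--Houri partition argument from the proof of \Cref{thm:relax} applies verbatim, so $C$ is still totally unimodular: partition any chosen subset $R$ of rows into those coming from the first $n$ rows and those coming from the last $m$ rows, and every column difference lies in $\{1,0,-1\}$.

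Next I would invoke the standard fact that if $C$ is totally unimodular and $b$ is an integer vector, then every vertex of $\{x \geqslant 0 : Cx \leqslant b\}$ is integral. Here $b = (l_1,\ldots,l_n,k,\ldots,k)^\top \in \mathbb{Z}^{n+m}$, because the job lengths $l_i \geqslant 1$ and the capacity $k$ are positive integers. The feasible region is bounded (each $a_{ij}$ lies in $[0,\min\{l_i,k\}]$), so the linear objective $\sum_{i\in N}\sum_{j\in M} v_{ij} a_{ij}$ attains its maximum at a vertex, and any optimal vertex---in particular the one returned by the modified \mech---is integral. This also shows the optimum of the modified LP coincides with that of the corresponding integer program, which is precisely the EPP condition for divisible jobs.

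I do not anticipate a genuine obstacle here; the only point requiring care is confirming that the modification really does not disturb $C$ (it does not, as noted above). As an alternative route that would additionally re-establish strong polynomiality, one could mirror \Cref{lem:equivalence} and recast the modified LP as a minimum-weight perfect $b$-matching on the bipartite graph $(N, M, E)$ with weights $-v_{ij}$ and degree bounds $b_i = l_i$ for $i \in N$, $b_j = k$ for $j \in M$; integrality then follows from the integrality of the $b$-matching polytope. I would present the total-unimodularity argument as the main proof since it is the most direct and reuses \Cref{thm:relax} almost word for word.
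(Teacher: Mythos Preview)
Your proposal is correct and matches the paper's own argument essentially verbatim: the paper observes that replacing the right-hand side $1$ by $l_i$ in the first $n$ constraints leaves the constraint matrix $C$ unchanged, invokes the total-unimodularity of $C$ established in the proof of \Cref{thm:relax}, and concludes that the modified LP has integral optimal solutions. Your additional remarks on boundedness, the integrality of the right-hand side, and the alternative $b$-matching route are sound elaborations but not needed beyond what the paper sketches.
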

Hence, the following proposition also holds.
\begin{proposition}
\label{cor:multi-divisible-efficient}
 Modified \mech\ is EPP.
\end{proposition}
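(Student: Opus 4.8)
The plan is to mirror the argument that took us from \Cref{thm:relax} to \Cref{cor:efficient}, now using \Cref{thm:multi-divisible-relax} in place of \Cref{thm:relax}. First I would unwind the definitions: for multi-slot divisible jobs, the EPP requirement (\Cref{def:efficiency}) asks the SSF to return an allocation $A \in \mathcal{A}$ maximizing $\sum_{i \in N}\sum_{j \in M} v_{ij} a_{ij}$, where under the divisible-job model $\mathcal{A}$ consists of the $0/1$ matrices satisfying $\sum_{j \in M} a_{ij} \leqslant l_i$ for every $i \in N$ and $\sum_{i \in N} a_{ij} \leqslant k$ for every $j \in M$. This is exactly the integer program whose LP relaxation is the modified LP that the modified \mech\ solves in its allocation step (the version of LP~(\ref{eqn:LP}) with the first block of constraints replaced by $\sum_{j \in M} a_{ij} \leqslant l_i$), so the objective of that LP literally is the per-period social welfare.

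Next I would invoke \Cref{thm:multi-divisible-relax}: since replacing the right-hand side $1$ by $l_i$ in the first block of constraints leaves the constraint matrix $C$ from the proof of \Cref{thm:relax} unchanged, and $C$ is totally unimodular, the modified LP has an integral optimal solution. An integral optimum of the LP relaxation is a fortiori an optimum of the corresponding integer program: the LP optimum upper-bounds the IP optimum, and here that bound is attained at an integral, hence IP-feasible, point. Consequently the allocation returned by the modified \mech\ lies in $\argmax_{A \in \mathcal{A}} \sum_{i \in N}\sum_{j \in M} v_{ij} a_{ij}$, which is precisely the EPP condition, and the proposition follows.

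Since this chain of implications is self-contained once \Cref{thm:multi-divisible-relax} is in hand, I do not expect a genuine obstacle; the only point that needs a line of care is to check that the integral feasible region of the modified LP really coincides with $\mathcal{A}$ under the divisible-job reading — i.e., that the capacity constraints are identical to those of LP~(\ref{eqn:LP}) and only the per-agent constraints are relaxed from ``at most one slot'' to ``at most $l_i$ slots'' — so that optimizing the LP objective over integral points is genuinely optimizing social welfare over feasible allocations rather than over some larger or smaller set.
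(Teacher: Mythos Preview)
Your proposal is correct and follows essentially the same route as the paper: the paper derives \Cref{cor:multi-divisible-efficient} as an immediate consequence of \Cref{thm:multi-divisible-relax}, exactly parallel to how \Cref{cor:efficient} follows from \Cref{thm:relax}, and your write-up spells out precisely that argument (integral LP optimum coincides with the IP optimum, hence social welfare is maximized over $\mathcal{A}$).
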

To reduce the allocation problem to a perfect $b$-matching problem, we need to alter $b_i = l_i$, and \Cref{lem:equivalence} holds. Therefore, we conclude the following result.
\begin{theorem}
 \label{thm:multi-divisible-strong}
  Modified \mech\ provides a combinatorial, strongly polynomial algorithm for computing a social schedule and delays.
\end{theorem}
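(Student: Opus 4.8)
The plan is to mimic exactly the argument chain that established \Cref{thm:strong} for the single-slot case, adapting the $b$-values of the matching instance to the modified per-period LP. Recall that for divisible multi-slot jobs the EPP allocation is the solution of LP~(\ref{eqn:LP}) with the first family of constraints replaced by $\sum_{j \in M} a_{ij} \leqslant l_i$ for all $i \in N$; by \Cref{thm:multi-divisible-relax} this LP has an integral optimum, so it is without loss of optimality to work with its LP relaxation. First I would set up the edge-weighted bipartite graph $(N, M, E)$ with edge $(i,j)$ of weight $-v_{ij}$, and impose the degree bounds $b_i = l_i$ for every $i \in N$ and $b_j = k$ for every $j \in M$. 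The analogue of \Cref{lem:equivalence} then goes through verbatim: a subset $E^* \subseteq E$ is a minimum-weight perfect $b$-matching if and only if the corresponding $0/1$ matrix $A^*$ solves the modified LP, since the feasibility constraints of the LP are precisely the $b$-matching degree constraints and the objectives differ only by a sign. This is the content the excerpt already flags right before the statement ("we need to alter $b_i = l_i$, and \Cref{lem:equivalence} holds").

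Next I would invoke the known fact \cite{schrijver2003combinatorial} that minimum-weight perfect $b$-matching in a bipartite graph admits a combinatorial, strongly polynomial algorithm, whose number of arithmetic operations and whose working space are bounded by polynomials in $n$ and $m$ only (independent of the magnitudes of the $v_{ij}$, the $l_i$, or $k$). Combining this with the equivalence lemma yields a combinatorial strongly polynomial routine for the allocation step of the modified \mech. For the delays, I would observe, exactly as in \Cref{cor:delay}, that $d_i$ is computed from one auxiliary allocation instance with agent $i$ deleted (i.e., the same $b$-matching problem on $(N \setminus \{i\}, M, E)$ with unchanged $b_j = k$ and $b_\ell = l_\ell$), plus $2(n-1)$ additions and one subtraction; the planner solves $n$ such instances, so the whole computation remains strongly polynomial. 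Finally I would assemble these pieces — the integrality/TU fact (\Cref{thm:multi-divisible-relax}), the modified \Cref{lem:equivalence}, and the strongly polynomial delay computation — to conclude the theorem.

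I do not expect a genuine obstacle here: the only point needing a sentence of care is that changing the right-hand side of the LP from $1$ to $l_i$ does not change the constraint \emph{matrix} $C$, so total unimodularity (hence integrality of the $b$-matching polytope, which is what makes "perfect $b$-matching" the right notion and keeps the reduction exact) is preserved — and that has already been noted in the text via the proof of \Cref{thm:relax}. One should also briefly check that a perfect $b$-matching exists whenever the LP is feasible with these $b$-values (e.g., the empty/partial allocation together with slack, or by working with a $b$-matching rather than a perfect $b$-matching and noting optimal solutions saturate as needed), but this is routine and parallels the single-slot treatment.

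\hfill$\blacksquare$
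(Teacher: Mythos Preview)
Your proposal is correct and follows essentially the same argument as the paper: the paper's entire justification is the one-line remark (which you quote) that altering $b_i = l_i$ makes \Cref{lem:equivalence} go through, then concludes exactly as in the single-slot case via \Cref{thm:multi-divisible-relax} and the analogue of \Cref{cor:delay}. You have simply spelled out the steps in more detail than the paper does.
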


\section{Experiments}
\label{sec:experiment}

While the mechanisms presented in the previous sections satisfy several desirable properties of a social scheduling mechanism for indivisible single and multiple slot jobs, its prioritizing profile for different classes of importance, costs of prioritization, and reduction in social congestion are not theoretically captured. In this section, we investigate these properties using real and synthetic datasets. The real dataset we collected from a store gave us only the checkout times. In absence of the length information of the visits, we resorted to the single-slot job model (with hourly slots) and tested the performance of \mech\ on this data (\S\ref{sec:reduction}). For multi-slot jobs, we simulated \MAA\ to find the suboptimality and the reduction in the running time (\S\ref{sec:approx-plot}).
For these experiments, we consider three discrete levels of valuations of the agents denoted by 3, 2, and 1, which can be interpreted as \hi, \med, and \lo\ respectively. The numbers represent the agent's valuation if they are allocated their most preferred slot. The valuations for the other slots are assumed to decrease with a multiplicative factor $\delta \in (0, 1)$ in the order of their slot preference, i.e., the valuation for the $t$-th most preferred slot of a \med\ agent will be $2 \delta^{t-1}$. We used Gurobi~\cite{gurobi} under an academic license for all experiments.

\subsection{Reduction in the social congestion}
\label{sec:reduction}
This section considers a real data of customer footfall in a general store that we have collected from the store. The dataset contains the customers' hourly checkout (billing) time from 7 AM to 9 PM (opening hours of the store) for the whole month of July 2020. Since the dataset was anonymized for customer identification, we have assumed that the billing timestamps are unique users for a day. 
Given the size of the store, around 32 people an hour should be a fair capacity to maintain social distance. However, the data shows that
the monthly average during the periods 5-6 PM, 6-7 PM, and 7-8 PM were  38.00, 48.63, and 52.83 respectively. Interestingly, the monthly average of the population in an hour is 26.5, which is well within the safety limits. Therefore, this dataset works as a perfect example where users can benefit significantly from social scheduling.

\begin{figure}[h]
\begin{multicols}{2}
\centering
    \includegraphics[width=\linewidth]{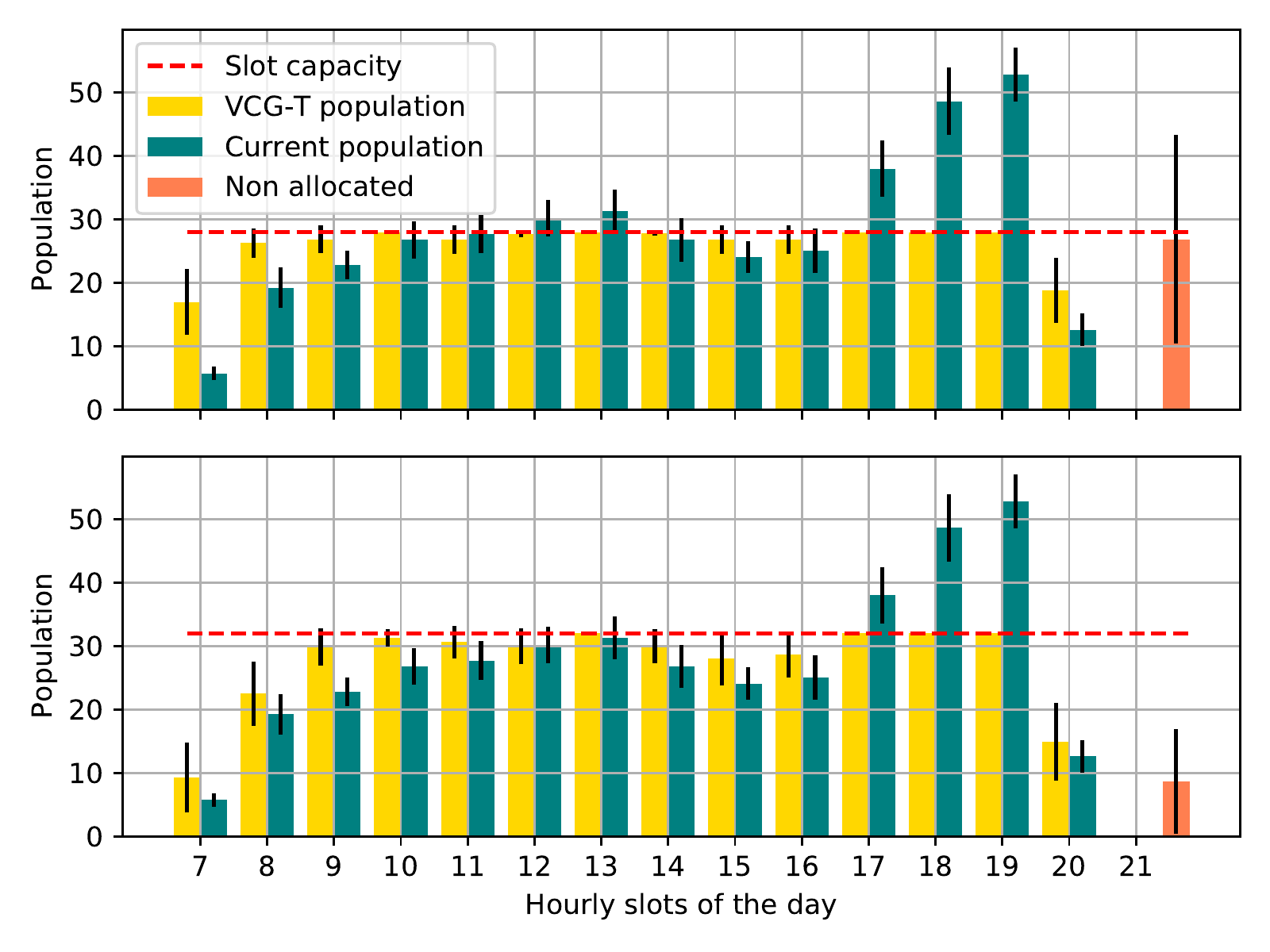}
    \caption{Social congestion reduction, slot capacities $28$ (top) and $32$ (bottom).}
    \label{fig:congestion}
    \centering
    \includegraphics[width=\linewidth]{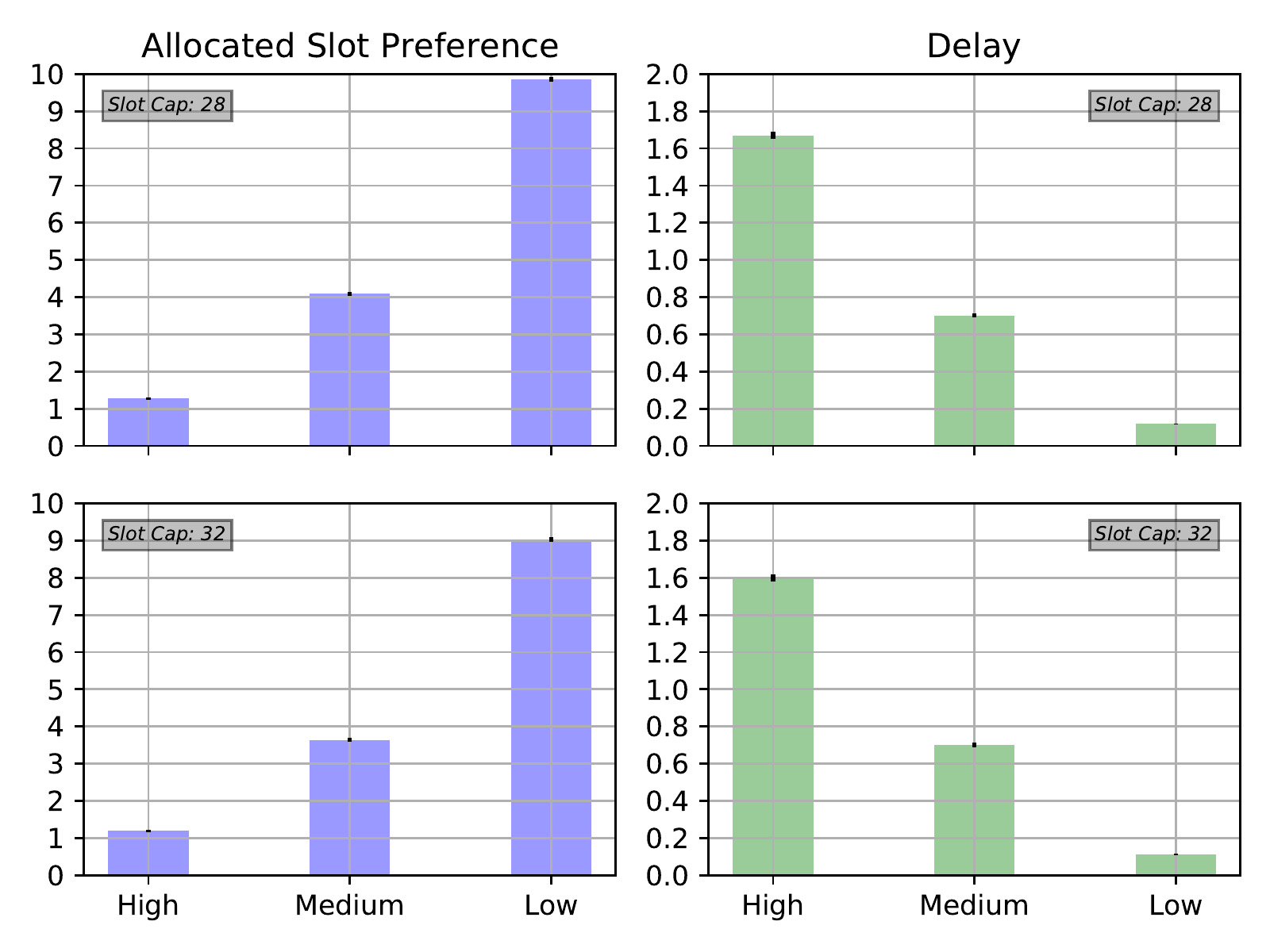}
    \caption{Priority and delay trade-off of \mech.}
    \label{fig:trade-off}
\end{multicols}
\end{figure}

We divide the store opening hours into 14 hourly slots between 7 AM to 9 PM. The valuations of every customer is drawn from a distribution \{\hi:0.1, \med:0.3, \lo:0.6\}.
In this experiment, an agent who is not allocated any slot under \mech\ on a day is removed from the system and counted separately. 
\Cref{fig:congestion} shows the comparison of the average current population with that allocated by \mech\ for slot capacities of 28 (above) and 32 (below). The figures also show the daily non-allocated population in red.  Each plot in this section shows the average values with 95\% confidence interval. The plots show the trade-off between better social distancing (lower slot capacity) and its cost (non-allocation). However, in both these cases, the social congestion is reduced by approximately 50\% during the rush hours.

\mech\ also prioritizes the jobs at a cost. The rows of \Cref{fig:trade-off} show the allocated slot preference and the delays for the three difference classes of valuations for the slot capacities 28 and 32 respectively. It shows that a higher valuation comes with a higher delay.

\subsection{Suboptimality and Complexity Reduction (\MAA)}
\label{sec:approx-plot}

The sub-optimality of \MAA\ (\Cref{alg:alg2}) was obtained for a worst-case scenario in \S\ref{sec:multi}. Here we investigate the sub-optimality of \MAA\ and the amount of time it reduces w.r.t.\ an algorithm that finds the optimal allocation of the slots. The top plot of \Cref{fig:tradeoff} shows the percentage reduction ($(t_\text{\textbf{\texttt{OPT}}} - t_\text{\MAA})/t_\text{\textbf{\texttt{OPT}}}$) in the running time of \MAA\ compared to the optimal mechanism,
where $t_\text{\textbf{\texttt{OPT}}}$ and $t_\text{\MAA}$ are the running times of the optimal and \MAA\ mechanisms respectively. 

The bottom plot shows the ratio of the optimal social welfare to the welfare yielded by \MAA. The experiment is run with $n=6, k=5$, and $m$ varying from $3$ to $8$. For each value of $m,n,k$, the experiment is repeated $100$ times. The parameters are chosen such that the optimal mechanism is computable in a reasonable time, yet the experiment yields an insightful result. We see that \MAA\ reduces the running time by more than $99.5\%$ and yields an approximation of roughly $1.75$ on an average. 
   
\begin{figure}[h]
\begin{multicols}{2}
\centering
    \includegraphics[width=\linewidth]{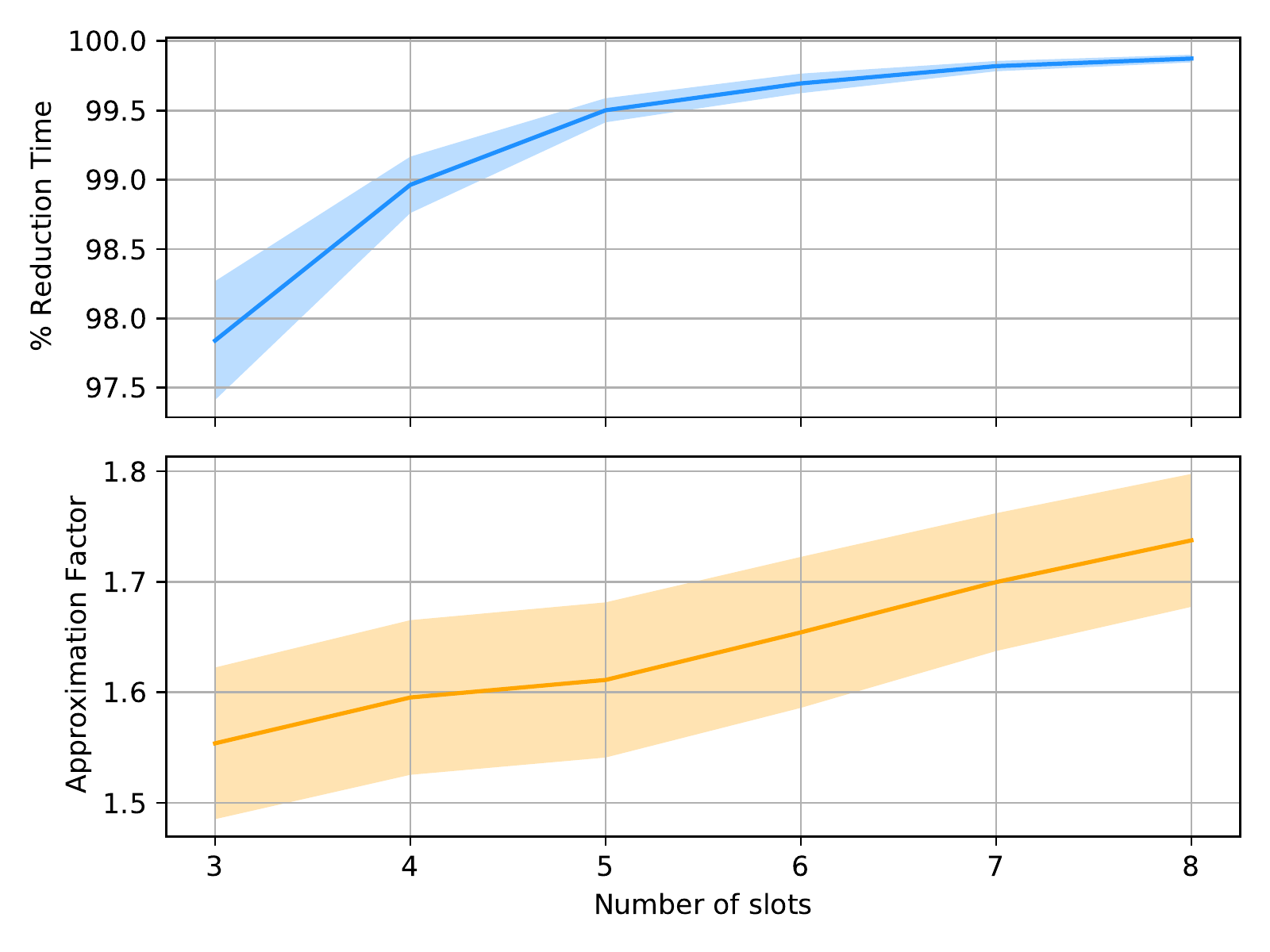}
    \caption{Running time and approximation factor trade-off for \MAA.}
    \label{fig:tradeoff}
    \centering
    \includegraphics[width=\linewidth]{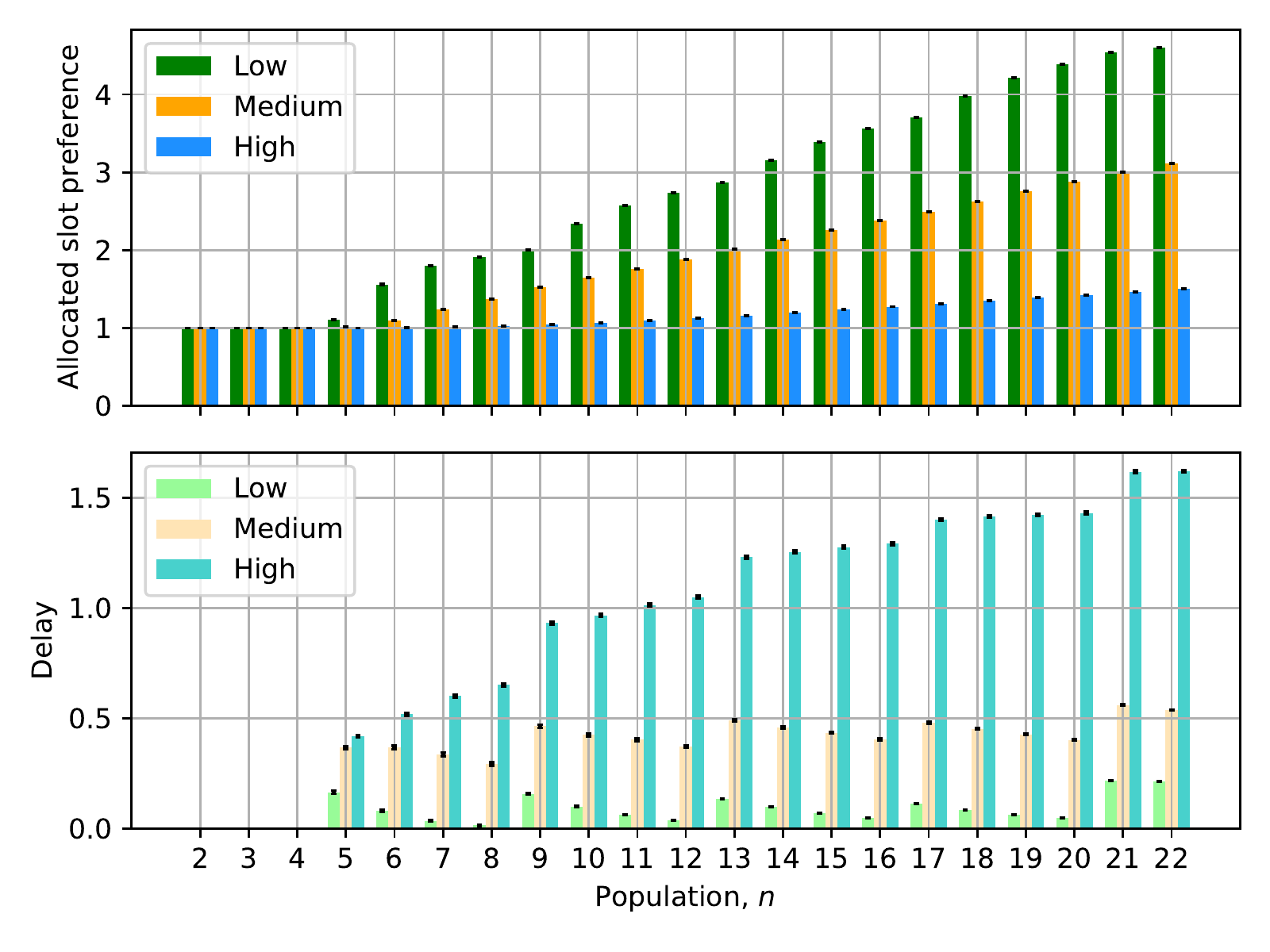}
    \caption{Priority-delay trade-off for \mech.}
    \label{fig:priority-delay}
\end{multicols}
\end{figure}
\subsection{Prioritizing profile and its cost}
\label{sec:priority}

In this section, we investigate what the typical priority slots allotted to an agent of a specific class in \mech\ are. The top plot of \Cref{fig:priority-delay} shows the agents' allocated slot preferences (mean with one standard deviation) versus the population ($n$) plot where $m=5, k=4$, and $\delta=0.65$. The importance of an agent is picked uniformly at random. Values of $n$ vary between $2$ to $1.1 mk$ in steps of 1 (for a population beyond $mk$, some agents have to be dropped). The experiment is repeated $100$ times for every $n$. The plot shows that the higher the importance, the lower is the allocated slot preference for the agents, which is desirable.
However, \mech\ does the prioritized allocations of the agents at the cost of their delays. The bottom plot of \Cref{fig:priority-delay} shows the corresponding delays decided by \mech\ for each of these three classes. The plot shows that an early slot allocation of an agent because of her importance also comes with a longer delay and shows the trade-off between these two decisions.

\subsection{Scalability}
\label{sec:scalability}

 This section examines \mech's computation time for finding the allocation and delays for a realistic population. We run \mech\ in \textsf{Python} for different number of slots ($m$) with slot capacity ($k$) being 12. For every $m$, we fixed $n = mk$ and repeated the experiment $10n$ times. \Cref{fig:scalability} shows the growth of the computation time of the mechanism. As a reference, to solve the allocation and delays for the store of \Cref{sec:reduction}, it takes about 100 secs. 
The simulations have been performed in a 64-bit Ubuntu 18.04 LTS machine with Intel(R) Core(TM) i7-7700HQ CPU @2.80GHz quad-core processors and 16 GB RAM.
\begin{figure}
\centering
    \includegraphics[width=0.55\linewidth]{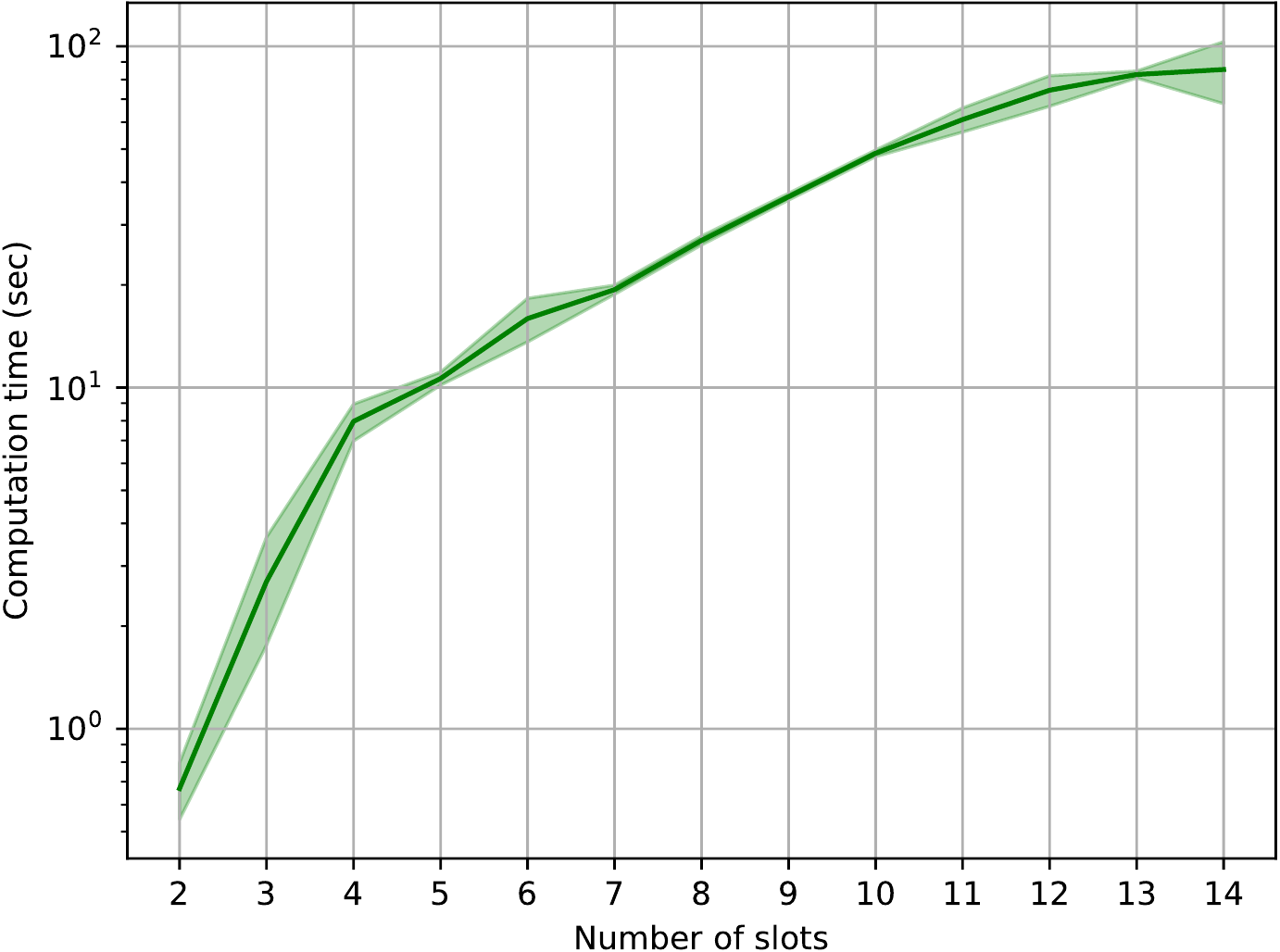}
    \caption{Computation times of \mech.}
    \label{fig:scalability} 
\end{figure}



\smallskip \noindent
{\bf Epilogue}: This paper provides a solution to social distancing using social scheduling keeping the COVID-19 pandemic as a motivation, but the solution apply to more general settings too. 
We have already developed an app that runs \mech.

%
%
%
\clearpage
\bibliographystyle{abbrvnat}
\bibliography{abb,swaprava,ultimate,references,references_robots,master}
\clearpage

\end{document}